\newcommand{\abs}[1]{\left\lvert#1\right\rvert}
\DeclarePairedDelimiter\ave{\langle}{\rangle}
\newcommand{\cI}{\mathcal{I}}
\renewcommand{\P}{\operatorname{Prob}}
\newcommand{\R}{\mathbb{R}}
\newcommand{\zb}{\boldsymbol{z}}
\newcommand{\Zb}{\boldsymbol{Z}}
\newcommand{\Yb}{\boldsymbol{Y}}
\newcommand{\yb}{\boldsymbol{y}}
\newcommand{\pr}[1]{{}^\prime\!#1}
\newtheorem{theorem}{Theorem}[section]
\theoremstyle{remark}\newtheorem{remark}[theorem]{Remark}
\newtheorem{Assumption}[theorem]{Assumption}
\begin{document}
\title{Kinetic models for systems of interacting agents with multiple microscopic states}
\author{Marzia Bisi \thanks{Department of Mathematical, Physical and Computer Sciences, University of Parma, Italy, (\texttt{marzia.bisi@unipr.it})} \and Nadia Loy \thanks{Department of Mathematical Sciences ``G. L. Lagrange'', Politecnico di Torino, Corso Duca degli Abruzzi 24, 10129 Torino, Italy (\texttt{nadia.loy@polito.it})}
}\date{\small }

\maketitle



\begin{abstract}
We propose and investigate general kinetic models
with transition probabilities that can describe the simultaneous change of multiple microscopic states of the interacting agents. These models can be applied to many problems in socio-economic sciences, where individuals may change both their compartment and their characteristic kinetic variable, as for instance kinetic models for epidemics or for international trade with possible transfers of agents. Mathematical properties of our kinetic model are proved, as existence and uniqueness of a solution for the Cauchy problem in suitable Wasserstein spaces. The quasi-invariant asymptotic regime, leading to simpler kinetic Fokker-Planck-type equations, is investigated and commented on in comparison with other existing models. Some numerical tests are performed in order to show time evolution of distribution functions and of meaningful macroscopic fields, even in case of non-constant interaction probabilities.
\end{abstract}

\textbf{Keywords:} Boltzmann equation, Markov process, multi-agent system, socio-economic modelling

\section{Introduction}

In the literature of kinetic models for multi-agent systems, there is an increasing interest in phenomena where the agents are characterized by a multiple microscopic state and they are, in particular, divided into subpopulations.

Kinetic theory for multi-agent systems has its roots in the classical kinetic theory related to the Boltzmann equation for the description of a rarefied gas \cite{cercignani1988BOOK}, in which individuals are molecules identified by a microscopic state $v$ that is the velocity and that changes because of \textit{binary interactions}. The classical kinetic theory for gas dynamics has been generalized to various kinds of interacting systems, where the microscopic state is not necessarily the velocity, providing
reasonable mathematical models for many socio-economic problems, as the evolution of wealth distribution~\cite{cordier2005JSP, toscani2019PRE}, the opinion formation~\cite{toscani2006CMS,toscani2018PRE}, the pedestrian or vehicular traffic dynamics \cite{festa2018KRM,freguglia2017CMS}, birth and death processes \cite{greenman2016PRE,loy2021KRM} and many others. Also in this field, models describing the interaction of different populations through a system of Boltzmann equations have been proposed for instance in \cite{DB2008CMS, During2010Econ} for wealth exchanges, in~\cite{during2009PRSA} for opinion formation in presence of leaders, in \cite{borsche2022PhysA} for multilane traffic models.

The classical Boltzmann equation has been extended some decades ago to mixtures of different gaseous species~\cite{chapman1970BOOK, kogan1969BOOK}, even in presence of chemical reactions \cite{Gio1999}, and also consistent BGK approximations have been proposed and investigated \cite{andries2002JSP, bisi2010PRE, pirner2021Fluids}. In these models, when describing bimolecular chemical reactions, a given binary interaction between molecules may simultaneously cause both the change of the velocity and transfers of the involved molecules to different species. Also models in which the molecule is characterized by the belonging to a species, a molecular velocity and another inner variable modelling the internal energy have been proposed \cite{borsoni2022CIMP}.

In the field of kinetic equations for multi-agent systems applied to socio-economic phenomena, a very interesting class of models in which individuals have a multiple microscopic state corresponds to models in which the total population is divided into subgroups and each agent is also described by a physical quantity (wealth, opinion, viral load, etc.). Each group is then characterized by a distribution function depending on the given microscopic physical quantity that can be exchanged both with individuals of the same subgroup and with individuals of a different subgroup, according to suitable interaction rules.
For simplicity, interactions causing exchanges of goods and the ones giving rise to a change of subgroup of one agent are often modelled separately, by means of different kinetic operators \cite{During2010Econ, DB2008CMS, during2009PRSA, dimarco2020PRE, machadoramos2019JMB, dellamarca2022MMAS}.
 Formally, models belonging to this class have been derived in \cite{loy2021KRM}, where the authors describe multi-agent systems in which the agents are characterized by a double microscopic state: a physical quantity~$v$, changing because of binary interactions, and a label $x$ denoting the subgroup of the agent, changing as a consequence of a Markovian process. The two stochastic processes for the evolution of $v$ and $x$ are \textit{independent} and occur with different frequencies, giving thus rise to different operators, where the one relevant to the variation of $v$ may be the sum of \textit{inter-} or \textit{intra-} species interactions.

However, a class of models worth to be investigated is the one in which the agent changes the microscopic quantity $v$ and the subgroup simultaneously as a consequence of the \textit{same} binary interaction. An example is given by the aforementioned bimolecular chemical reactions in gas mixtures.
The classical kinetic description for chemically reacting gases was introduced in the pioneering work~\cite{rossani1999PhysA} where each gas species has a different distribution function of the microscopic velocity of its molecules. A pair of molecules belonging to potentially different species may interact exchanging both the velocity and the species. Such bimolecular reaction is described by means of a collisional operator that involves both distributions of reactants and products, and the change of the velocity is included by means of a transformation  with unit Jacobian like in the classical strong Boltzmann equation.
In the literature of kinetic theory for socio-economic sciences, the recent paper \cite{bisi2021PhTB} describes the trade among different subpopulations, living in different countries, taking into account also possible transfers of individuals from one country to another, by means of suitable Boltzmann-type operators similar to the ones modelling bimolecular chemical reactions in gas mixtures. In this model, then, binary interactions between individuals may lead to both an exchange of wealth and to a transfer to another subpopulation as a consequence of the \textit{same} binary interaction, i.e. the microscopic state identifying the wealth and the one related to the label, that denotes the subpopulation, change \textit{simultaneously}.

Another topic that has gained much interest in recent years also in kinetic theory, is the modelling of the spread of an epidemic: in this respect, compartmental Boltzmann models allowing the passage of individuals from an epidemiological compartment to another have been proposed, essentially of SIR type, where
a susceptible individual could become infected and then removed because of healing or death \cite{dimarco2020PRE, dellamarca2022NHM, dellamarca2022preprint}. A different kinetic description of infectious diseases consists in modelling interactions among different types of human cells, including the immune cells \cite{machadoramos2019JMB, dellamarca2022MMAS}. For example, in some epidemic models, susceptible (carrying a vanishing viral load~$v$) and infected individuals (with $v>0$) interact exchanging the quantity $v$, and as soon as the susceptible individual's viral load becomes positive because of the binary interaction, then he/she becomes infected \cite{dellamarca2022NHM,dellamarca2022preprint}.
In these works, the authors, similarly to \cite{loy2021KRM}, start from a microscopic description in which each agent is characterized by the microscopic quantity $v$ and by the label~$x$ denoting, in this case, the compartment. The microscopic dynamics is then described through discrete in time stochastic processes in which the new microscopic physical quantity and label are modelled through Markov-type jump processes governed by suitable transition probabilities \cite{dellamarca2022NHM,dellamarca2022preprint}. As a consequence, the kinetic model implementing the prescribed microscopic dynamics is a kinetic equation with an operator in which the kernel is related to the transition probability.

This formulation of the Boltzmann equation is well known also in kinetic theory for a single gas. Indeed, besides the classical Boltzmann operator in which the kernel has a proper cross section taking into account the intermolecular potential (depending on the relative speed and on the impact angles) \cite{cercignani1988BOOK}, other different forms have been used in the literature. The most common one is the so--called Waldmann representation \cite{waldamnn1958HdP}, showing in the kernel the probability distribution of the collision process transforming pre-collision velocities $(v,w)$ into the post-collision ones $(v', w')$, and the Boltzmann integral over the unit sphere is replaced by integrals over post-collision velocity variables. An analogous scattering kernel formulation replaces Waldmann kernel by its integral over the velocity of the partner molecule \cite{spiga1985PhysA}. The equivalence between these kinetic equations has been proved in \cite{boffi1990PhysA} for microscopic interactions that conserve the average and the energy. In the literature of kinetic equations for socio-economic sciences, the authors in \cite{loy2021KRM} show the equivalence between the collision-like Boltzmann equation and Markovian jump-processes described by transition probabilities that  can be related to the Waldmann (probabilistic) representation of the Boltzmann equation.


In this paper we will present kinetic models for socio-economic problems in which agents have a multiple microscopic state, starting from a microscopic stochastic process ruled by transition probabilities that allow to describe the simultaneous change of all the microscopic variables and by a microscopic state-dependent interaction frequency. In the case in which the agent state is given by a microscopic quantity $v$ and by a label $x$ denoting the subgroup, we will show that this approach has some advantages with respect to the classical collision model \cite{bisi2021PhTB}. Indeed, as explained also in \cite{bisi2021PhTB}, the construction of the gain term of the Boltzmann operators requires the invertibility of the collision process, that obviously holds in gas-dynamics (because of conservation of total momentum and energy in each collision), but not in human interactions, that are also influenced by non-deterministic (random) effects. This invertibility property is not needed in the operator with a transition probability in the kernel, because each single interaction has its own probability, not related to the reverse process. Moreover, realistic situations with non constant interaction probabilities are easier to manage in the stochastic Boltzmann formulation, therefore this approach could have many applications in kinetic modelling of social sciences. For these reasons in this paper we present a formal and organic treatment of kinetic equations involving a microscopic stochastic process that simultaneously changes several internal states of the interacting agents (typically, their compartment and the value of their kinetic variable).
In more detail, the paper is organized as follows.

In Section 2 we formally derive the general form of a kinetic model implementing a microscopic dynamics in which each agent is characterized by a set $\zb\in \Omega \subset \mathbb{R}^d$ of microscopic states which may change simultaneously in each binary interaction, that is described by a transition probability and ruled by a microscopic state-dependent frequency. Even though the procedure is quite classical, stating the discrete in time stochastic process will be useful for writing the Nanbu-Babovski Algorithm for simulating the microscopic dynamics. Then we revise and establish the relation with some well-known models such as the collision-like Boltzmann equation and the kinetic equations describing transfers among different groups due to binary interactions. Finally, we explicitly derive the kinetic equation for a multi-agent system in which a binary interaction causes simultaneously a transfer and an exchange of a microscopic quantity.
In Section~3, mathematical properties of the Cauchy problem associated to our general Boltzmann equation are discussed, proving existence and uniqueness of a solution in suitable Wasserstein spaces. Then, the quasi-invariant limit commonly used to investigate socio-economic kinetic models is adapted to our general frame, allowing to derive suitable Fokker-Planck equations with additional terms taking into account transfers of agents.
Section 4 is devoted to the investigation of a specific kinetic model fitting into our general framework, describing international trade with possible transfers of individuals: evolution of number density and mean wealth of each country are computed from the kinetic model, the quasi-invariant limit is performed, and analogies and differences with respect to analogous models for a single population \cite{cordier2005JSP} are discussed, with particular reference to the Pareto index of steady distributions.
In Section 5 we show some numerical tests, simulating our kinetic equations by means of a Nanbu-Babovski Monte Carlo algorithm implementing the discrete in time stochastic process presented in Section 2: the evolution of distribution functions and of macroscopic quantities are commented on for varying parameters. Section 6 contains some concluding remarks and perspectives.

\section{Kinetic models for binary interactions processes}
In this section, we provide a formal derivation of kinetic equations implementing binary interactions among agents whose microscopic state is a vector, described by means of Markovian processes, where the interaction frequency depends on the microscopic state of the interacting agents. Then, we illustrate the relation to well-known kinetic models for binary interactions leading to exchange of physical quantities (the collision-like Boltzmann equation) and to label-switch processes, also named \textit{transfers}. We eventually present a general framework for describing, through kinetic equations, microscopic binary interaction processes leading to both exchanges of a physical quantity and label-switches.
\subsection{Kinetic models with transition probabilities}
Let us consider a large system of agents described by a microscopic state $\zb\in \Omega \subset \mathbb{R}^d$.  We shall suppose that the change of the microscopic state (of all its components simultaneously) is due to stochastic \textit{binary interactions}. A probabilistic description of such interactions may be given by means of \textit{transition probability} functions
\begin{equation}
	T(\zb'\vert \zb, \yb)>0, \qquad \tilde{T}(\yb'|\zb,\yb)>0 \qquad \forall\zb, \yb \in \Omega,\ t>0,
	\label{trans_prob}
\end{equation}
namely the \textit{conditional} probabilities that, given a binary interaction between an agent $\zb$ and an agent $\yb$, the first changes into $\zb'$ while the second into $\yb'$, respectively. Such a microscopic description may be assimilated to a \textit{Markov-type jump process}. In order for $T(\zb'\vert \zb, \yb), \tilde{T}(\yb'|\zb,\yb)$ to be conditional probability densities, they have to satisfy the following further property:
\begin{equation}\label{eq:norm.T}
\int_{\Omega} T(\zb'\vert \zb, \yb) \, d\zb'=1, \quad \int_{\Omega}\tilde{T}(\yb'|\zb,\yb) \, d\yb'=1 \qquad \forall \zb, \yb \in \Omega,\ t>0.
\end{equation}
The binary interactions may happen with a frequency $\lambda_{\zb\yb}$, namely the frequency of the binary interactions between two agents having microscopic states $\zb, \yb$ depends on the microscopic states themselves.
We remark that the two transition probabilities $T$ and $\tilde{T}$ are given in order to take into account for possible asymmetries in the binary interactions. We remark that the symmetry of the binary interactions is here expressed by
\begin{equation}\label{eq:symm}
T(\zb'\vert \zb, \yb)=\tilde{T}(\yb'\vert \zb, \yb),
\end{equation}
and $\lambda_{\zb\yb}=\lambda_{\yb\zb}$.
As classically done \cite{pareschi2013BOOK}, a kinetic description of the multi-agent system can be derived by introducing discrete in time stochastic processes. Let $\Zb_t,\Yb_t \in \Omega$ be random variables describing the microscopic state of two agents at time $t>0$. Let $f=f(\zb,t)$ be the probability density function associated to our multi-agent system, i.e. the probability density function of the random variable of a given agent $\Zb_t$, thus satisfying
\begin{equation}\label{eq:mass_cons}
\int_{\Omega} f(\zb,t) \, d\zb =1.
\end{equation}
During a sufficiently small time $\Delta{t}>0$ the agents may or may not change their state $\Zb_t,\Yb_t$ depending on whether a binary interaction takes place or not. We express this discrete-in-time random process as
\begin{equation}
\begin{aligned}
	  \Zb_{t+\Delta{t}} &= (1-\Theta)\Zb_t+\Theta \Zb_t', \\
	  \Yb_{t+\Delta{t}} &= (1-\Theta)\Yb_t+\Theta \Yb_t',
	\label{eq:micro.rules.gen}
	\end{aligned}
\end{equation}
where $\Zb_t', \, \Yb_t'$ are random variables describing the new microscopic state of $\Zb_t$ and $\Yb_t$ respectively after a binary interaction and having \textit{joint} probability density functions $g=g(\Zb_t'=\zb'; \Zb_t=\zb, \Yb_t=\yb), \tilde{g}=\tilde{g}(\Yb_t'=\yb';  \Zb_t=\zb, \Yb_t=\yb) $, while $\Theta \in\{0,\,1\}$ is a Bernoulli random variable, which we assume to be independent of all the other random variables appearing in~\eqref{eq:micro.rules.gen}, discriminating whether a binary interaction takes place ($\Theta=1$) or not ($\Theta=0$) during the time $\Delta{t}$. In particular, we set the probability to change the microscopic state
\begin{equation}
	\P(\Theta=1)=\lambda_{\Zb_t\Yb_t} \Delta{t},
	\label{eq:bernoulli}
\end{equation}
where $\lambda_{\Zb_t\Yb_t}$ is the interaction frequency between agents with microscopic states $\Zb_t$ and $\Yb_t$.
Notice that, for consistency, we need $\lambda_{\Zb_t\Yb_t} \Delta{t} \le 1$.

\noindent Let now $\phi=\phi(\zb)$ be an observable quantity defined on $\zb \in \Omega$. From~\eqref{eq:micro.rules.gen}-\eqref{eq:bernoulli}, together with the assumed independence of $\Theta$, we see that the mean variation rate of $\phi$ in the time interval $\Delta{t}$ satisfies
\begin{align*}
	&\frac{\ave{\phi(\Zb_{t+\Delta{t}})}-\ave{\phi(\Zb_t)}}{\Delta{t}}+\frac{\ave{\phi(\Yb_{t+\Delta{t}})}-\ave{\phi(\Yb_t)}}{\Delta{t}}= \\
	&=\frac{\ave{(1-\lambda_{\Zb_t\Yb_t}\Delta t) \phi(\Zb_t)}+\Delta t\ave{\lambda_{\Zb_t\Yb_t}\phi(\Zb_t')}-\ave{\phi(\Zb_t)}}{\Delta{t}}\\
	&+\frac{\ave{(1-\lambda_{\Zb_t\Yb_t}\Delta t) \phi(\Yb_t)}+\Delta t\ave{\lambda_{\Zb_t\Yb_t}\phi(\Yb_t')}-\ave{\phi(\Yb_t)}}{\Delta t}
\end{align*}
where $\ave{C_t}$ denotes the average of the random variable $C_t$ with respect to its probability density function. Whence, we deduce the instantaneous time variation of the average of $\phi$ in the limit $\Delta{t}\to 0^+$ as
\begin{equation}\label{eq:ave_4}
\frac{d}{dt}\ave{\phi(\Zb_t)}= \dfrac{1}{2}\Big(\ave{\lambda_{\Zb_t\Yb_t}\phi(\Zb_t')}+\ave{\lambda_{\Zb_t\Yb_t}\phi(\Yb_t')}-\ave{\lambda_{\Zb_t\Yb_t}\phi(\Zb_t)}-\ave{\lambda_{\Zb_t\Yb_t}\phi(\Yb_t)} \Big)
\end{equation}
where we used the fact that $\ave{\phi(\Zb_t)}=\ave{\phi(\Yb_t)}$ that implies $\ave{\phi(\Zb_t)}+\ave{\phi(\Yb_t)}=2\ave{\phi(\Zb_t)}$.

\noindent We now specify the gain terms as
$$
\begin{aligned}[b]
\ave{\lambda_{\Zb_t\Yb_t}\phi(\Zb_t')}=\int_{\Omega}\int_{\Omega^2}\phi(\zb')\lambda_{\zb\yb}g(\zb';\zb,\yb)  \, d\zb d\yb \, d\zb',\\
 \ave{\lambda_{\Zb_t\Yb_t}\phi(\Yb_t')}=\int_{\Omega}\int_{\Omega^2}\phi(\yb')\lambda_{\zb\yb}\tilde{g}(\yb';\zb,\yb)\, d\zb d\yb \, d\yb',
\end{aligned}
$$
where $g$ and $\tilde{g}$ are the joint probability density functions of $\Zb_t'$ and $\Yb_t'$, respectively, and of the samples of the random variables $\Zb_t=\zb, \Yb_t=\yb$ at time $t$. The probability density functions $g$ and $\tilde{g}$  are defined as
\begin{equation}\label{def:g_gt}
g(\zb';\zb,\yb)=T(\zb'|\zb,\yb) \, f_2(\zb,\yb,t)  \qquad \tilde{g}(\yb';\zb,\yb)=\tilde{T}(\yb'|\zb,\yb) \, f_2(\zb,\yb,t),  \,
\end{equation}
where $f_2(\zb,\yb,t)$ is the joint distribution of the couple $(\zb,\yb)$ at time $t$. As typically done in kinetic theory, we assume \textit{propagation of chaos}, i.e. $\zb$ and $\yb$ are independently distributed, which allows us to perform the factorization $f_2(\zb,\yb,t)=f(\zb,t) f(\yb,t)$, so that we can write
$$
g(\zb';\zb,\yb)=T(\zb'|\zb,\yb) \, f(\zb,t) f(\yb,t)  \qquad \tilde{g}(\yb';\zb,\yb)=\tilde{T}(\yb'|\zb,\yb) \, f(\zb,t) f(\yb,t).
$$
It is immediate to verify that $g$ and $\tilde{g}$ are probability density functions thanks to \eqref{eq:norm.T} and \eqref{eq:mass_cons}.
Analogously, the loss terms can be naturally written as
$$
\begin{aligned}[b]
\ave{\lambda_{\Zb_t\Yb_t}\phi(\Zb_t)}=\int_{\Omega^2}\phi(\zb)\lambda_{\zb\yb} f(\zb,t) f(\yb,t) \, d\zb d\yb ,\\
 \ave{\lambda_{\Zb_t\Yb_t}\phi(\Yb_t)}=\int_{\Omega^2}\phi(\yb)\lambda_{\zb\yb}f(\zb,t) f(\yb,t) \, d\zb d\yb.
\end{aligned}
$$
Therefore Eq. \eqref{eq:ave_4} can be stated as
\begin{equation}\label{eq:Boltz.T.fr}
\begin{aligned}
\frac{d}{dt} \int_{\Omega} f(\zb,t) \phi(\zb) \, d\zb &= \dfrac{1}{2} \int_{\Omega}\int_{\Omega^2}\lambda_{\zb\yb}T(\zb'|\zb,\yb)\left(\phi(\zb')  -\phi(\zb)\right)f(\zb,t) f(\yb,t) \, d\zb d\yb d\zb' \\
&+\dfrac{1}{2} \int_{\Omega}\int_{\Omega^2}\lambda_{\zb\yb}\tilde{T}(\yb'|\zb,\yb)\left(\phi(\yb')\,  -\phi(\yb)\right) f(\zb,t) f(\yb,t) \, d\zb d\yb  d\yb',
\end{aligned}
\end{equation}
where we have used \eqref{eq:norm.T} in order to write the loss terms.

In the following, we shall illustrate three meaningful examples of kinetic models describing binary interaction processes by means of transition probabilities: $i)$ binary interactions causing exchange of physical quantities; $ii)$ binary interactions leading to transfers of individuals; $iii)$ binary interactions leading to both exchange of physical quantities and transfers of individuals.

\subsection{Boltzmann-type description of classical binary interaction dynamics}\label{Boltz_sec}
Let us consider the case in which the microscopic state of the agent is a non-negative physical quantity $\zb=v \in \Omega=\mathbb{R}_+$. Extensions to negative and possibly also bounded microscopic
states are mostly a matter of technicalities.
In general, as classically done in kinetic theory \cite{pareschi2013BOOK}, if $v,w \in \mathbb{R}_+$ denote the pre-interaction states of any two interacting agents, their post-interaction states $v',w'$ will be given by general interaction rules in the form
\begin{equation}\label{eq:bin_rules_gen}
v'=I(v,w)+D(v,w)\eta, \qquad w'=\tilde{I}(v,w)+\tilde{D}(v,w) \eta_*
\end{equation}
where $\eta$ and $\eta_*$ are independent random variables satisfying $\ave{\eta}=\ave{\eta_*}=0, \ave{\eta^2}=\ave{\eta_*^2}=1$, namely with zero average and unitary variance.
It is known
that an aggregate description of the (sole) binary interaction dynamics inspired by the
principles of statistical mechanics can be obtained by introducing a probability density
function $f = f(v, t) \ge 0$ such that $f(v, t)dv$ gives the proportion of agents having
at time $t$ a microscopic state comprised between $v$ and $v + dv$. Such a probability density
function satisfies a Boltzmann-type kinetic equation, which in weak form reads
\begin{equation}\label{eq:Boltz}
\frac{d}{dt} \int_{\R_+} f(v,t) \phi(v) \, dv=\dfrac{\lambda}{2}\ave{\int_{\R_+^2} \big( \phi(v')+\phi(w')-\phi(v)-\phi(w)\big)f(v,t) f(w,t) \, dv dw}
\end{equation}
where $\lambda$ is the interaction frequency that we here assume to be independent of the microscopic states of the agents.

On the other hand, if we want to describe the binary interactions through transition probabilities \eqref{trans_prob} with $\zb'=v', \yb'=w'$, we have that \eqref{eq:Boltz.T.fr} can be rewritten as
\begin{equation}\label{eq:boltz.T}
\begin{aligned}
\frac{d}{dt} \int_{\R_+} f(v,t) \phi(v) \, dv &= \dfrac{\lambda}{2} \int_{\R_+^2}\left( \int_{\R_+} \phi(v')T(v'|v,w)\, dv' -\phi(v)\right)f(v,t) f(w,t) \, dv dw\\
&+\dfrac{\lambda}{2} \int_{\R_+^2}\left( \int_{\R_+} \phi(w')\tilde{T}(w'|v,w)\, dw' -\phi(w)\right) f(v,t) f(w,t) \, dv dw.
\end{aligned}
\end{equation}
  We can define the first two statistical moments of the distribution function $f$ as:
$$ M(t):=\int_{\R_+} v f(t,\,v)\,dv, \qquad  E(t):=\int_{\R_+} v^2f(t,\,v)\,dv, $$
that represent the average and energy, respectively. As done in \cite{loy2020CMS} in the symmetric case, in order to establish a relation between \eqref{eq:Boltz}-\eqref{eq:bin_rules_gen} and \eqref{eq:boltz.T}, we investigate primarily the trend of the statistical moments as prescribed by the two different models. 
Setting $\phi(v)=v, v^2$ in~\eqref{eq:boltz.T} and \eqref{eq:Boltz}-\eqref{eq:bin_rules_gen} yields the evolution equations for $M, E$ for a system of agents obeying the microscopic dynamics expressed in terms of transition probabilities \eqref{trans_prob} or interaction rules \eqref{eq:bin_rules_gen}, respectively. By comparing the evolution equations of $M$ and $E$ prescribed by the two different kinetic models, we see that the evolution is the same if we choose
\begin{align*}
 I(v,\,w)=V_T(v,\,w), \quad D(v,\,w)=\sqrt{E_T(v,\,w)-V_T^2(v,\,w)}=:D_T(v,\,w), \\
\tilde{I}(v,\,w)=V_{\tilde{T}}(v,\,w), \quad \tilde{D}(v,\,w)=\sqrt{E_{\tilde{T}}(v,\,w)-V_{\tilde{T}}^2(v,\,w)}=:D_{\tilde{T}}(v,\,w)
\end{align*}
where
$$ V_T(v,\,w):=\int_{\R_+} v'T(v'\,\vert\,v,\,w)\,dv', \qquad
	E_T(v,\,w):=\int_{\R_+} v'^2T(v'\,\vert\,v,\,w)\,dv' $$
	and
	$$ V_{\tilde{T}}(v,\,w):=\int_{\R_+} w'\tilde{T}(w'\,\vert\,v,\,w)\,dw', \qquad
	E_{\tilde{T}}(v,\,w):=\int_{\R_+} w'^2\tilde{T}(w'\,\vert\,v,\,w)\,dw' $$
denote the mean and the energy, respectively, of $T$ and $\tilde{T}$ for a given pair $(v,\,w)\in{\R_+}\times{\R_+}$ of pre-interaction states, while $D_T(v,\,w)$ and $D_{\tilde{T}}(v,\,w)$ are the standard deviations of $T$ and $\tilde{T}$, respectively.
Therefore, if dealing with \eqref{eq:Boltz} we can consider the collisions
\begin{equation}
	v'=V_T(v,\,w)+D_T(v,\,w)\eta, \qquad w'=V_{\tilde{T}}(v,\,w)+D_{\tilde{T}}(v,\,w)\eta_\ast,
	\label{eq:binary}
\end{equation}
and this choice makes formulations \eqref{eq:Boltz} and \eqref{eq:boltz.T} equivalent \textit{at the macroscopic level} (at least for the mass, average and energy).
As highlighted in \cite{loy2020CMS}, in general, \eqref{eq:Boltz} with \eqref{eq:bin_rules_gen} and \eqref{eq:boltz.T} are not the same kinetic equation, although with the choice \eqref{eq:binary} they account for the same evolution of the first and second
statistical moments of $f$. Nevertheless, if in \eqref{eq:boltz.T} we take
\begin{equation}\label{eq:T_delta}
T(v'|v,w)=\delta\Big(v'-(V_T(v,w)+D_T(v,w)\eta)\Big), \qquad \tilde{T}(w'|v,w)=\delta\Big(w'-(V_{\tilde{T}}(v,w)+D_{\tilde{T}}(v,w)\eta_\ast)\Big)
\end{equation}
where in the right-hand side $\delta$ is the Dirac delta, then we can
formally show that \eqref{eq:boltz.T} becomes exactly \eqref{eq:Boltz}-\eqref{eq:bin_rules_gen}. Of course, in this case the right hand side \eqref{eq:boltz.T} is meant to be written in brackets $\ave{\cdot}$.

\subsection{Label switch process caused by binary interactions}
Let us now consider the case in which $\Omega=\cI_n=\lbrace 1,...,n \rbrace$ and the microscopic discrete variable $x\in\cI_n$ is regarded as a \textit{label}, that may denote the belonging of the agent to a certain group or subpopulation.
We assume that \textit{label switches}, i.e. migrations across subpopulations, can be caused by binary interactions between agents, causing a transfer of (potentially) both of them, but in such a way that the total mass of the agents in the system is conserved. We say that this process is formally a Markov-type one because the probability to switch from the current labels $x, y$ to new labels $x',y'$ does not depend on how the agents reached previously the labels $x,y$.
In particular we denote
\begin{equation}\label{eq:P}
P_{xy}^{x'y'}:= P(x',y'\vert x,y)
\end{equation}
the conditional probability density function of switching to the groups $x',y'$ given the pre-interaction labels $x,y$.
\begin{remark} \label{rem:discrete_2}
Since the variables $x,y$ are discrete, the mapping $(x',y')\mapsto P(x',y'\vert x,y)$ is a discrete probability measure. Consequently, we actually have
\begin{equation}\label{eq:P_cons}
\int_{\cI_n^2} P(x',y'\vert x,y)\,dx'dy'=\sum_{x',y' \in \cI_n}P(x',y'\vert x,y)=1.
\end{equation}
\end{remark}

If we introduce the probability density function $f=f(x,t)\geq 0$ of the agents with label $x$ at time $t$, its evolution can be modelled by a kinetic equation describing a Markov-type jump process:
\begin{equation}
	\partial_tf(x',t)=\lambda\left(\int_{\cI_n^3} P_{xy}^{x'y'}f(x,t)f(y,t)\,dxdydy'-f(x',t)\right),
	\label{eq:mark_proc_2}
\end{equation}
where $\lambda>0$ is the (constant) switch frequency. In weak form~\eqref{eq:mark_proc_2} reads
\begin{equation}
	\frac{d}{dt}\int_{\cI_n}\psi(x)f(x,t)\,dx=\lambda\int_{\cI_n^2}\int_{\cI_n^2}(\psi(x')-\psi(x))P_{xy}^{x'y'}f(x,t)f(y,t)\,dx' dy'\,dx dy,
	\label{eq:jump_proc_weak_2}
\end{equation}
where $\psi:\cI_n\to\R$ is an observable quantity (test function) defined on $\cI_n$.
Equation \eqref{eq:jump_proc_weak_2} can be derived by \eqref{eq:Boltz.T.fr} by setting $\zb=x$ and
\[
P_{xy}^{x'y'}=\dfrac{T(x'|x,y)+\tilde{T}(y'|x,y)}{2}.
\]
Since $x\in\cI_n$ is discrete, we may conveniently represent the distribution function $f$ as
\begin{equation}
	f(x,t)=\sum_{i=1}^{n}f_i(t)\delta(x-i),
	\label{eq:f.delta.label_switch_2}
\end{equation}
where $\delta(x-i)$ is the Dirac distribution centred in $x=i$ and $f_i=f_i(t)\geq 0$ is the probability that an agent is labelled by $x=i$ at time $t$.
In this way, we reconcile the weak form~\eqref{eq:jump_proc_weak_2} with the convention introduced in Remark~\ref{rem:discrete_2}, and~\eqref{eq:jump_proc_weak_2} actually becomes
\begin{equation}\label{eq:jump_proc_weak_3}
 \sum_{i=1}^{n}\psi(i)f_i'(t)=\lambda\sum_{i,l=1}^{n}\sum_{j,k=1}^{n}(\psi(i)-\psi(j))P_{jk}^{il}f_j(t)f_k(t).
 \end{equation}
 We have conservation of the total mass thanks to \eqref{eq:P_cons}, as we can verify setting $\psi=1$ in \eqref{eq:jump_proc_weak_2}. Using \eqref{eq:f.delta.label_switch_2} and, then, setting $\psi=1$ in \eqref{eq:jump_proc_weak_3}, this corresponds to
\begin{equation}\label{eq:f_lab_cons}
\sum_{i=1}^n f_i(t)=1.
\end{equation}
Choosing $\psi$ such that $\psi(s)=1$ for a certain $s\in\cI_n$ and $\psi(x)=0$ for all $x\in\cI_n\setminus\{s\}$ we get in particular
\begin{equation}
	f_s'=\lambda\left(\sum_{j,k,l=1}^{n}P_{jk}^{sl}f_jf_k-f_s\right), \qquad s=1,\,\dots,\,n,
	\label{eq:fi.jump_2}
\end{equation}
where we have used \eqref{eq:P_cons} and \eqref{eq:f_lab_cons}.
If we allow the interaction frequency $\lambda_{xy}$ to depend on the labels of the interacting agents, then the equation becomes
$$ \sum_{i=1}^{n}\psi(i)f_i'(t)=\sum_{i,l=1}^{n}\sum_{j,k=1}^{n}(\psi(i)-\psi(j))\lambda_{jk}P_{jk}^{il}f_j(t)f_k(t) $$
so that
\begin{equation}
	f_s'=\sum_{j,k,l=1}^{n}\left(\lambda_{jk}P_{jk}^{sl}f_jf_k-\lambda_{sk}P_{sk}^{jl}f_sf_k\right), \qquad s=1,\,\dots,\,n.
	\label{eq:fi.jump_3}
\end{equation}
In general we define
\begin{equation}\label{def:rate}
\beta_{ij}^{kl}:=\lambda_{ij}P_{ij}^{kl}
\end{equation}
the \textit{rate} of transfer for a couple from the subgroups $(i,j)$ to $(k,l)$ as a consequence of a binary interaction between two agents labelled $i$ and $j$.
\subsection{Interacting particles with label switch and exchange of physical quantities}\label{subsec:int_and_switch}
Let us now consider the case in which an agent is characterized by a physical quantity $v\in \R_+$ and by a label $x\in \cI_n$ that, again, denotes the belonging of the agent to a certain group. Hence, now, the microscopic state is $\zb=(x,v)\in\Omega=\cI_n\times \mathbb{R}_+$. The present framework allows to describe a situation in which an agent, as a consequence of a single binary interaction, changes (simultaneously) both the microscopic quantity $v$ and the label $x$.
Agents within the same group, i.e. with the same label, are assumed to be indistinguishable. We can take into account the possibility that the interactions among agents with the same label differ from those among agents with different labels. In general, if $(x,v),\, (y,w)\in\cI_n\times\R_+$ denote the pre-interaction states of any two interacting agents, their post-interaction quantities $v',\,w'$ will be given by \eqref{eq:bin_rules_gen}, where $I=I_{xy},\tilde{I}=\tilde{I}_{xy},D=D_{xy},\tilde{D}=\tilde{D}_{xy}$ may depend on $x,y$. Moreover, agents can also migrate to other subgroups $x',y'$ and this microscopic transfer process is described by the probability function \eqref{eq:P}.

We now want to derive a kinetic equation for the joint distribution function $f=f(x,v,t)\geq 0$, such that $f(x,v,t)dv$ gives the proportion of agents labelled by $x\in\cI_n$ and having microscopic state comprised between $v$ and $v+dv$ at time $t$. The discreteness of $x$ allows us to represent $f$ as \cite{loy2021KRM}
\begin{equation}\label{def:f}
f(x,v,t)=\sum_{i=1}^{n}f_i(v,t)\delta(x-i),
\end{equation}
where $f_i=f_i(v,t)\geq 0$ is the distribution function of the microscopic state $v$ of the agents with label $i$ and, in particular, $f_i(v,t)dv$ is the proportion of agents with label $i$ whose microscopic state is comprised between $v$ and $v+dv$ at time $t$.

\noindent Since both the interactions and the label switching conserve the total mass of the system, we may assume that $f(x,v,t)$ is a probability distribution, namely:
\begin{equation}
	\int_{\R_+}\int_{\cI_n} f(x,v,t)\,dx\,dv=\sum_{i=1}^{n}\int_{\R_+} f_i(v,t)\,dv=1 \qquad \forall\,t>0.
	\label{eq:f.prob}
\end{equation}
Notice, however, that the $f_i$'s are in general not probability density functions because their $v$-integral varies in time due to the label switching. We denote by
\begin{equation}
	\rho_i(t):=\int_{\R_+} f_i(v,t)\,dv
	\label{eq:rhoi}
\end{equation}
the mass of the group of agents with label $i$, thus $0\leq\rho_i(t)\leq 1$ and
$$ \sum_{i=1}^{n}\rho_i(t)=1 \qquad \forall\,t>0. $$
Let us also define the first statistical moment of $f_i$
\[
M_i(t):=\int_{\mathbb{R}_+} v\, f_i(v,t) \, dv
\]
so that the average of the $i-$th group is
\[
m_i(t):=\dfrac{M_i(t)}{\rho_i(t)}.
\]

The kinetic evolution equation for $f(x,v,t)$, expressed as in \eqref{def:f}, is given by \eqref{eq:Boltz.T.fr}, where now $\zb=(x,v)$,
which has to hold for every $\phi=\phi(x,v):\cI_n\times\R_+\to\R$.
Hence the evolution equation for $f$ is
\begin{align}
	\begin{aligned}[b]
	\frac{d}{dt}\sum_{i=1}^n&\int_{\R_+}\phi(i,v) f_i(v,t)\,dv\\ &=\dfrac{1}{2}\int_{\R_+^2}\sum_{i,j,k=1}^{n}\lambda_{jk}\int_{\R_+} T((i,v')|(j,v),(k,w))\big(\phi(i,v')-\phi(j,v)\big)f_j(v,t)f_k(w,t) \, dv dw dv'\\
	&+\dfrac{1}{2}\int_{\R_+^2}\sum_{l,j,k=1}^{n}\lambda_{jk}\int_{\R_+}\tilde{T}((l,w')|(j,v),(k,w))\big(\phi(l,w')-\phi(k,w)\big)f_j(v,t)f_k(w,t) \, dv dw dw'.
	\end{aligned}
\end{align}
Choosing $\phi(x,v)=\psi(x)\varphi(v)$ with $\psi$ such that $\psi(s)=1$ for a certain $s\in\cI_n$ and $\psi(x)=0$ for all $x\in\cI_n \setminus\{s\}$, we finally obtain the following system of equations for the subgroup distributions~$f_s$
\begin{align}
	\begin{aligned}[b]
	\frac{d}{dt}\int_{\R_+}\varphi(v)f_s(v,t)\,dv &=\\
	&=\dfrac{1}{2}\int_{\R_+^2}\sum_{j,k,i=1}^{n}\int_{\R_+}\Big(\lambda_{jk}\varphi(v') T((s,v')|(j,v),(k,w))f_j(v,t)\\
	&-\lambda_{sk}\varphi(v)T((i,v')|(s,v),(k,w))f_s(v,t)\Big)f_k(w,t) \, dv dw dv'\\
	&+\dfrac{1}{2}\int_{\R_+^2}\sum_{j,k,l=1}^{n}\int_{\R_+}\Big(\lambda_{jk}\varphi(w') \tilde{T}((s,w')|(j,v),(k,w))f_k(w,t)\\
	&-\lambda_{js}\varphi(w)\tilde{T}((l,w')|(j,v),(s,w))f_s(w,t)\Big)f_j(v,t)dv  dw dw'.
	\end{aligned}
	\label{eq:boltz.fi_22}
\end{align}
In particular, in order to implement the microscopic process \eqref{eq:bin_rules_gen}-\eqref{eq:P}, we choose
\begin{equation}\label{eq:T_lab.bin}
\begin{aligned}[b]
T((x',v')|(x,v),(y,w))&=\ave{\int _{\cI_n}P_{xy}^{x'y'}(v,w)\delta(v'-(I_{xy}(v,w)+D_{xy}(v,w)\eta)) dy'}, \\
\tilde{T}((y',w')|(x,v),(y,w))&=\ave{\int _{\cI_n}P_{xy}^{x'y'}(v,w)\delta(w'-(\tilde{I}_{xy}(v,w)+\tilde{D}_{xy}(v,w)\eta_\ast)) dx'},
\end{aligned}
\end{equation}
where we remark that $P_{xy}^{x'y'}=P_{xy}^{x'y'}(v,w)$ may depend on the microscopic physical quantities of the interacting agents.
Considering \eqref{eq:T_lab.bin}, \eqref{eq:boltz.fi_22} becomes
\begin{align}
	\begin{aligned}[b]
	\frac{d}{dt}\int_{\R_+}\varphi(v)f_s(v,t)\,dv &=\\
	&=\dfrac{1}{2}\ave{\int_{\R_+^2}\sum_{j,k,l=1}^{n}\Big(\beta_{jk}^{sl}\varphi(I_{xy}(v,w)+D_{xy}(v,w)\eta) f_j(v,t)\Big.\\
	\Big.&\phantom{\dfrac{1}{2}\int_{\R_+^2}\sum_{j,k,l=1}^{n}\beta_{jk}^{sl}\varphi(I_{xy}(v,w)}-\beta_{sk}^{jl}\varphi(v)f_s(v,t)\Big)f_k(w,t) \, dv dw }\\
	&+\dfrac{1}{2}\ave{\int_{\R_+^2}\sum_{j,k,i=1}^{n}\Big(\beta_{jk}^{is}\varphi(\tilde{I}_{xy}(v,w)+\tilde{D}_{xy}(v,w)\eta_\ast) f_k(w,t)\Big.\\
	\Big.&\phantom{\dfrac{1}{2}\int_{\R_+^2}\sum_{j,k,l=1}^{n}\beta_{jk}^{sl}\varphi(I_{xy}(v,w)}-\beta_{js}^{ik}\varphi(w)f_s(w,t)\Big)f_j(v,t)dv  dw },
	\end{aligned}
	\label{eq:boltz.fi_23}
\end{align}
where we have used \eqref{def:rate}. We remark that now $\beta_{ij}^{kl}$ may depend on the microscopic variables $v$ and $w$ through $P_{ij}^{kl}$. Moreover, as done in \cite{loy2021EJAM} in a linear case, also the interaction frequency may depend on the microscopic state $v$.

We remark that in this case where we consider both exchanges \eqref{eq:bin_rules_gen} (with $I,\tilde{I},D$ and $\tilde{D}$ that may depend on the labels) and transfers \eqref{eq:P}, asymmetric binary interactions arise quite commonly, even if the two processes \eqref{eq:bin_rules_gen} and \eqref{eq:P}, separately, are symmetric. This is mainly due to the fact that the microscopic rule \eqref{eq:bin_rules_gen} depends on the label of the agents. Indeed, if we consider a transfer $(i,j) \rightarrow (k,l)$, the reverse transfer $(k,l) \rightarrow (i,j)$ may occur with a different probability and with a different interaction law, losing thus the reversibility of the process, usually assumed in classical Boltzmann descriptions.
Specifically, in gas mixtures the reversibility is guaranteed by conservations of momentum and total energy, and the post--collision velocities may be uniquely determined in terms of the pre--collision velocities and of the impact angles~\cite{chapman1970BOOK, kogan1969BOOK}, even in presence of chemical reactions \cite{rossani1999PhysA}. The break of symmetry between the direct and the reverse collision is known to occur in presence of inelastic collisions (for instance in granular media~\cite{bobylev2000JSP}) causing a decay in time of the kinetic energy of the system. For interactions involving human beings the kinetic approach is much more complicated (even under simplistic assumptions), and exchanges of goods and transfers among different compartments may be non--symmetric. Just to give an example, in socio-economic problems the fraction of the own wealth that each agent is willing to give to the others may depend also on the proper amount of wealth \cite{bisi2017UMI}, and moreover a transfer from a poor country to a rich one might be much more probable than the reverse transfer \cite{bisi2021PhTB}.
This is why the general approach with generally different transition probabilities $T$ and $\tilde{T}$ provides a useful tool for a correct description of this kind of processes. Moreover, it allows to build more easily exchange and transfer operators for a generic number $n$ of subpopulations. Indeed, the usual way of extending Boltzmann theory to a set of $n>1$ constituents consists in building up a set of $n$ Boltzmann equations, each one for the distribution function of the $i$-th constituent, with $i=1, \dots, n$ \cite{Gio1999, rossani1999PhysA, DB2008CMS}. On the other hand, our transition probability approach includes the label of the individual compartment into the set of microscopic states characterizing the individual, dealing thus with only one kinetic equation, that could be separated into $n$ different equations only when one needs to compute the pertinent moments of each compartment by choosing the appropriate test function as done in \eqref{eq:boltz.fi_22}. This turns out to be a great advantage even from the computational point of view. As we will see in numerical tests shown in Section 5, in the present approach it is also straightforward to consider transition rates $\beta_{ij}^{kl}$ explicitly dependent on the microscopic states, both through the binary interaction frequency and /or through the transfer probability $P_{ij}^{kl}$, investigating thus more realistic cases with respect to classical kinetic descriptions that, for the sake of simplicity, assume constant interaction probabilities in the kernel of the Boltzmann operators. Furthermore, this approach allows to include the stochastic contributions $\eta$ and $\eta_\ast$ in the binary interaction rules, as invertibility is not required as in the construction of the operators.

\section{Formal study of the kinetic equation with transition probabilities}
In this section we intend to revise and illustrate some analytical tools that are useful for formally studying equation \eqref{eq:Boltz.T.fr}. After briefly stating some results on the existence and uniqueness of the solution, we consider the quasi-invariant limit in various regimes of equation \eqref{eq:boltz.fi_23} involving label switching and exchange of physical quantities.
\subsection{Basic theory of kinetic models with transition probabilities in Wasserstein spaces}
 The strong form of \eqref{eq:Boltz.T.fr} coupled with an initial condition $f_0(\zb)$ defines the following Cauchy problem
\begin{equation}\label{eq:Cauchy}
\begin{cases}
&\dfrac{\partial}{\partial t}  f(\zb,t)  = Q^+(f,f)  -f(\zb,t)\displaystyle\int_{\Omega}\lambda_{\zb\yb}  f(\yb,t) \, d\yb, \qquad t>0, \quad \zb \in \Omega,\\
&f(0,\zb)=f_0(\zb), \qquad \zb \in \Omega,
\end{cases}
\end{equation}
where
\begin{equation}
Q^+(f,f)= \dfrac{1}{2}\int_{\Omega^2}\lambda_{\pr{\zb}\pr{\yb}} T(\zb|\pr{\zb},\pr{\yb})f(\pr{\zb},t) f(\pr{\yb},t) \, d\pr{\zb} d\pr{\yb}+\dfrac{1}{2}\int_{\Omega^2}\lambda_{\pr{\zb}\pr{\yb}} \tilde{T}(\zb|\pr{\yb},\pr{\zb})f(\pr{\zb},t) f(\pr{\yb},t) \, d\pr{\zb} d\pr{\yb},
\end{equation}
where $\pr{\zb},\pr{\yb}$ are the pre-interaction states, with the compatibility condition $\int_{\Omega}f_0(\zb) \, d\zb=1$ as \eqref{eq:mass_cons} holds true. We remark that everything could be written for a generic mass $\rho>0$.
Let us now define
$$\bar{\lambda}:=\int_{\Omega}\lambda_{\zb,\yb}  f(\yb,t) \, d\yb
$$
that we assume to be constant throughout the whole text (this assumption includes the case of a constant $\lambda_{\zb\yb}$). If we multiply both sides of the equation by $e^{\bar{\lambda}t}$ and we integrate in time we get
\begin{equation}\label{eq:sol}
\begin{aligned}[b]
f(\zb,t)=e^{-\bar{\lambda}t}f_0(\zb)+\int_0^t e^{\bar{\lambda}(s-t)}\left[\dfrac{1}{2}\int_{\Omega^2}\right.&\left.\lambda_{\pr{\zb},\pr{\yb}} T(\zb|\pr{\zb},\pr{\yb})f(\pr{\zb},s) f(\pr{\yb},s) \, d\pr{\zb} d\pr{\yb}	\right.\\
&\left.+\dfrac{1}{2}\int_{\Omega^2}\lambda_{\pr{\zb},\pr{\yb}} \tilde{T}(\zb|\pr{\yb},\pr{\zb})f(\pr{\zb},s) f(\pr{\yb},s) \, d\pr{\zb} d\pr{\yb}\right] ds,
\end{aligned}
\end{equation}
where we have used \eqref{eq:mass_cons}.
Let us define $(\Omega,d)$ a polish space. Analogously to what has been done in \cite{freguglia2017CMS}, we see that an appropriate space in which \eqref{eq:sol} can be studied is $X:=\mathcal{C}([0,\bar{t}];\mathcal{M}_+(\Omega))$, where $\bar{t}>0$ is a final time and $\mathcal{M}_+(\Omega)$ is the space of positive measures on $\Omega$ having unitary mass. Therefore $f \in X$ is a continuous mapping as a function of time over $[0,\bar{t}]$ and it is a positive measure satisfying \eqref{eq:mass_cons} as a function of the microscopic state $\zb \in \Omega$. In particular, $X$ is a complete state with the distance
\[
\sup_{t\in [0,\bar{t}]} W_1(f(t,\zb),g(t,\yb))
\]
where
\begin{equation}\label{def:W1}
W_1(f(t,\zb),g(t,\yb)) = \inf_{\underline{\mu}\in \Gamma(f,g)} \int_{\Omega^2} d(\zb,\yb) \underline{\mu}(t,\zb,\yb) \, d\zb d\yb
\end{equation}
is the 1-Wasserstein distance between $f(t, \cdot)$ and $g(t,\cdot )\in \mathcal{M}_+(\Omega)$, being $\Gamma(f,g)$ the space of the probability density functions defined on $\Omega^2$ having marginals $f$ and $g$.

As done in \cite{freguglia2017CMS}, we shall always assume that the transition probabilities $T$ and $\tilde{T}$ satisfy the following Lipschitz continuity property.
\begin{Assumption}\label{assumptionLipT}
Let $T(\zb|\pr{\zb},\pr{\yb}),\tilde{T}(\zb|\pr{\zb},\pr{\yb}) \in \mathscr{P}(\Omega)$ for all $\pr{\zb},\pr{\yb} \in \Omega$,
where $\mathscr{P}(\Omega)$ is the space of probability measures on $\Omega$. We assume that there exists
$Lip(T) > 0$, such that
\[
W_1(T(\cdot|\pr{\zb},\pr{\yb}), T(\cdot|\pr{\zb}_*,\pr{\yb}_*)) \le \textrm{Lip}(T) (|\pr{\zb}-\pr{\zb}_* |+|\pr{\yb}-\pr{\yb}_*|)
\]
for all $\pr{\zb},\pr{\yb},\pr{\zb}_*,\pr{\yb}_* \in \Omega$ and that the same holds for $\tilde{T}$.
\end{Assumption}

The following result holds.
\begin{theorem}\label{Theorem_basic}
Let $f_0 \in \mathcal{M}_+(\Omega)$ and
let us assume that $T$, $\tilde{T}$ satisfy assumption \ref{assumptionLipT}, that $W_1(f_0,T),W_1(f_0,\tilde{T})<\infty$, that $\bar{\lambda}$ is constant and that $\lambda_{\zb\yb}$ is lower and upper bounded, i.e. $\exists\, \tilde{\lambda}^m,\tilde{\lambda}^M>0$ such that $0<\tilde{\lambda}^m<\lambda_{\zb\yb} <\tilde{\lambda}^M<\infty$.
Then, there exists a unique $f\in X$ which solves \eqref{eq:Cauchy}, and \eqref{eq:Cauchy} exhibits continuous dependence on the initial data.

\noindent Moreover, if $\tilde{\lambda}^M\textrm{Lip}(T ),\tilde{\lambda}^M\textrm{Lip}(\tilde{T})<\dfrac{1}{2}$, then \eqref{eq:Cauchy} admits a unique equilibrium distribution $f_{\infty}$, which is a probability measure on $\Omega$ and which is also globally attractive, i.e.
\[
\lim_{t \rightarrow \infty} W_1(f(t; \cdot); f_{\infty})= 0
\]
for every solution $f$ to \eqref{eq:Cauchy}.
\end{theorem}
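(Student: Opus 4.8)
The plan is to recast the Cauchy problem \eqref{eq:Cauchy} in its mild (Duhamel) form \eqref{eq:sol} and to run a Banach fixed point argument in the complete metric space $X=\mathcal{C}([0,\bar{t}];\mathcal{M}_+(\Omega))$ endowed with $\sup_{t\in[0,\bar{t}]}W_1(\cdot,\cdot)$, exactly as in \cite{freguglia2017CMS}. Define $\mathcal{S}\colon X\to X$ by letting $\mathcal{S}[f](\zb,t)$ be the right-hand side of \eqref{eq:sol}; a fixed point of $\mathcal{S}$ is precisely a mild solution of \eqref{eq:Cauchy}. First I would check that $\mathcal{S}$ maps $X$ into itself: non-negativity is preserved because $T,\tilde T\ge 0$ and $e^{\bar{\lambda}(s-t)}\ge 0$; the unit-mass constraint \eqref{eq:mass_cons} is preserved because \eqref{eq:norm.T} forces $Q^+(f,f)$ to carry total mass $\bar{\lambda}$, whence $\int_\Omega\mathcal{S}[f]\,d\zb=e^{-\bar{\lambda}t}+\int_0^t\bar{\lambda}e^{\bar{\lambda}(s-t)}\,ds=1$; continuity in $t$ and finiteness of the relevant Wasserstein distances follow from the integral representation together with the hypotheses $W_1(f_0,T),W_1(f_0,\tilde T)<\infty$ and $\tilde{\lambda}^m<\lambda_{\zb\yb}<\tilde{\lambda}^M$.

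The technical core is a Lipschitz estimate, in the $1$-Wasserstein metric, for the bilinear gain operator, of the form $W_1\!\left(Q^+(f,f),Q^+(g,g)\right)\le C\,W_1(f,g)$ with $C$ depending only on $\tilde{\lambda}^M$, $\textrm{Lip}(T)$, $\textrm{Lip}(\tilde T)$. I would obtain it by writing $Q^+(f,f)/\bar{\lambda}$ as a genuine mixture of the probability measures $T(\cdot\vert\pr{\zb},\pr{\yb})$, $\tilde T(\cdot\vert\pr{\zb},\pr{\yb})$ against the weights $\lambda_{\pr{\zb}\pr{\yb}}f(\pr{\zb})f(\pr{\yb})/\bar{\lambda}$ — which have unit mass precisely because $\bar{\lambda}$ is constant — and then invoking the standard stability of Wasserstein distances under mixtures, $W_1\!\left(\int\mu_a\,d\alpha(a),\int\mu_a\,d\beta(a)\right)\le\int W_1(\mu_a,\mu_b)\,d\pi(a,b)$ for any coupling $\pi$ of $\alpha,\beta$, combined with Assumption \ref{assumptionLipT} to control $W_1(T(\cdot\vert\pr{\zb},\pr{\yb}),T(\cdot\vert\pr{\zb}_*,\pr{\yb}_*))$, the elementary bound $W_1(f\otimes f,g\otimes g)\le 2W_1(f,g)$ via the product of an optimal plan for $(f,g)$, and the boundedness of $\lambda_{\zb\yb}$ to absorb the discrepancy of the mixing weights in the non-constant case. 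Granting this, convexity of $W_1$ applied to the convex combination \eqref{eq:sol} (coefficients $e^{-\bar{\lambda}t}$ and $\bar{\lambda}e^{\bar{\lambda}(s-t)}\,ds$, integrating to $1$) gives
\[
W_1(\mathcal{S}[f](t),\mathcal{S}[g](t))\le\int_0^t\bar{\lambda}e^{\bar{\lambda}(s-t)}\,\frac{C}{\bar{\lambda}}\sup_{\tau\le s}W_1(f(\tau),g(\tau))\,d\tau\le\frac{C}{\bar{\lambda}}\bigl(1-e^{-\bar{\lambda}\bar{t}}\bigr)\,\|f-g\|_X,
\]
so $\mathcal{S}$ is a contraction once $\bar{t}$ is small enough. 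Banach's theorem then yields a unique local mild solution; since the contraction constant depends on the interval length but not on the starting time, the solution propagates over consecutive intervals of fixed length to a unique global $f\in X$, and the same estimate closed by Gronwall's lemma gives continuous dependence on $f_0$.

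For the long-time statement, under the reinforced hypothesis $\tilde{\lambda}^M\textrm{Lip}(T),\tilde{\lambda}^M\textrm{Lip}(\tilde T)<\tfrac12$ the mixture estimate sharpens to $W_1(\mathcal{G}(f),\mathcal{G}(g))\le\kappa\,W_1(f,g)$ with $\kappa:=\tilde{\lambda}^M(\textrm{Lip}(T)+\textrm{Lip}(\tilde T))<1$, where $\mathcal{G}(f):=Q^+(f,f)/\bar{\lambda}$. Since $\mathcal{G}$ is a self-map of the complete metric space of probability measures on $\Omega$ with finite first moment (finiteness propagated by Assumption \ref{assumptionLipT} and $W_1(f_0,T)<\infty$), Banach's theorem provides a unique fixed point $f_\infty$, which solves $Q^+(f_\infty,f_\infty)=\bar{\lambda}f_\infty$ and is thus a stationary probability measure for \eqref{eq:Cauchy}. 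Global attractivity then follows by feeding $\kappa<1$ and $\mathcal{G}(f_\infty)=f_\infty$ into the Duhamel estimate: for any solution $f$, $W_1(f(t),f_\infty)\le e^{-\bar{\lambda}t}W_1(f_0,f_\infty)+\int_0^t\bar{\lambda}e^{\bar{\lambda}(s-t)}\kappa\,W_1(f(s),f_\infty)\,ds$, and Gronwall's lemma gives $W_1(f(t),f_\infty)\le e^{-(1-\kappa)\bar{\lambda}t}W_1(f_0,f_\infty)\to 0$. The main obstacle throughout is the bilinear Wasserstein estimate for $Q^+$: it requires combining the Lipschitz regularity of $T,\tilde T$ with the chaos-factorised structure $f\otimes f$ and the non-constant weights $\lambda_{\pr{\zb}\pr{\yb}}$, and tracking the constants carefully enough for the sharp threshold $\tilde{\lambda}^M\textrm{Lip}(T)<\tfrac12$ to emerge; the measure-theoretic bookkeeping (masses of the gain terms, convexity of $W_1$ for the Duhamel combination, completeness of $X$) is routine by comparison.
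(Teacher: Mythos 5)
Your proposal is correct and follows essentially the same route as the paper, whose proof consists of deferring to the Banach fixed-point argument of Appendix A in the cited reference \cite{freguglia2017CMS} (mild/Duhamel formulation, $W_1$-Lipschitz estimate for the gain operator via Assumption \ref{assumptionLipT} and the mixture/gluing property, contraction on short time intervals, and a second contraction for the equilibrium under the smallness condition on $\tilde{\lambda}^M\textrm{Lip}(T)$, $\tilde{\lambda}^M\textrm{Lip}(\tilde{T})$), with the only stated modification being the use of the primal definition \eqref{def:W1} in place of Kantorovich--Rubinstein duality since $\Omega$ is not assumed bounded — which is exactly where your hypotheses $W_1(f_0,T),W_1(f_0,\tilde T)<\infty$ and the bounds on $\lambda_{\zb\yb}$ enter. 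Your write-up is in effect the fully spelled-out version of that cited argument.
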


\begin{proof}
The proof follows the same steps as done in \cite{freguglia2017CMS} Appendix A, where the authors prove the results in the case of a bounded $\Omega$ and, thus, use the dual form of the 1-Wasserstein distance due to the Rubinstein-Kantorovitch Theorem \cite{ambrosio}. In the present case, as $\Omega$ is arbitrary, we can use in the proof the definition \eqref{def:W1} recalling the hypothesis $W_1(f_0,T),W_1(f_0,\tilde{T})<\infty$ and $\tilde{\lambda}^m<\lambda_{\zb\yb}<\tilde{\lambda}^M$.
\end{proof}

Moreover, the following Theorem holds in the case of label switching and exchange of physical quantities.
\begin{theorem}
Let the transition probability distributions have the form
\[
T(x,v|(\pr{x},\pr{v}),(\pr{y},\pr{w}))=\sum_{i=1}^{n}T_{i}(v|(\pr{x},\pr{v}),(\pr{y},\pr{w}))\delta(x-i)
\]
where $T_i$ satisfies
\[
|T_i(v|(\pr{x},\pr{v}),(\pr{y},\pr{w}))| <Lip(T_i)(|\pr{x}-\pr{y}|+|\pr{v}-\pr{w}|)
\]
and let the analogous property hold for $\tilde{T}$ and $\tilde{T}_i$.
Let moreover
\[
f_0(x,v)=\sum_{i=1}^{n} f^{(0)}_i(v)\,\delta(x-i)
\]
be a prescribed kinetic distribution function at time $t = 0$ over the space of microscopic states $(x,v)\in \cI_n\times\R_+$ such that $f_i \ge 0$, $\sum_{i=1}^n\int_{\R_+}f_i(v,t)\, dv=1, \, \forall t>0$.
Then the unique solution to \eqref{eq:Cauchy} is of the form
$$
f(x,v,t) = \sum_{i=1}^n f_i(v,t)\, \delta (x-i)
$$
(analogous to \eqref{eq:f.delta.label_switch_2}), with coefficients $f_i(v,t)$ given by \eqref{eq:fi.jump_3} along with the initial conditions
$f_i(v,0) = f^{(0)}_i(v)$. In addition, it depends continuously on the initial datum as stated by
Theorem~\ref{Theorem_basic}.
\end{theorem}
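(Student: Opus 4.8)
The plan is to deduce the statement from Theorem~\ref{Theorem_basic} by exhibiting a solution of the prescribed \emph{layered} form and then invoking uniqueness. First I would verify that the hypotheses of Theorem~\ref{Theorem_basic} apply in the present setting: the state space is $\Omega=\cI_n\times\R_+$ with the product metric $d((x,v),(y,w))=|x-y|+|v-w|$, which is Polish; the constancy of $\bar\lambda$ and the bounds on $\lambda_{\zb\yb}$ are assumed throughout; and the pointwise bounds on the components $T_i$, together with the analogous ones for $\tilde T_i$, are used to recover the $W_1$-Lipschitz continuity of $T(\cdot\,|\,\pr{\zb},\pr{\yb})$ and $\tilde T(\cdot\,|\,\pr{\zb},\pr{\yb})$ required by Assumption~\ref{assumptionLipT}. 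Here one observes that for layered transition probabilities the Wasserstein distance $W_1\big(T(\cdot\,|\,\pr{\zb},\pr{\yb}),T(\cdot\,|\,\pr{\zb}_*,\pr{\yb}_*)\big)$ is controlled by the $L^1$-in-$v$ differences of the $T_i$, since the discrete marginal is carried by the fixed atoms $\{1,\dots,n\}$. Theorem~\ref{Theorem_basic} then yields a unique $f\in X=\mathcal{C}([0,\bar t];\mathcal{M}_+(\Omega))$ solving~\eqref{eq:Cauchy} with datum $f_0$, with continuous dependence on $f_0$.

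Next I would introduce the subset $X_n\subset X$ of time-continuous curves of layered probability measures, i.e.\ those of the form $t\mapsto\sum_{i=1}^n f_i(v,t)\,\delta(x-i)$ with $f_i(\cdot,t)\ge 0$ and $\sum_{i=1}^n\int_{\R_+}f_i(v,t)\,dv=1$, and show that $X_n$ is closed in $X$ and invariant under the fixed-point map associated with the mild formulation~\eqref{eq:sol}. Closedness holds because $\cI_n\times\R_+$ is closed in $\R^2$, so a $W_1$-limit of measures supported there remains supported there, and every probability measure on $\cI_n\times\R_+$ disintegrates as $\sum_i f_i(v)\,\delta(x-i)$. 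Invariance is the computational core: inserting the layered ansatz for $f$ and the layered forms of $T,\tilde T$ into $Q^+(f,f)$, performing the $x'$- and $y'$-integrations against the Dirac masses and the $\pr{x}$-, $\pr{y}$-sums, one finds that $Q^+(f,f)$ is again layered, with $i$-th component a quadratic expression in the $f_j$'s and the kernels $T_i,\tilde T_i$ (equivalently, in the rates $\beta_{ij}^{kl}$), while the loss term $-f\int_\Omega\lambda_{\zb\yb}f\,d\yb$ trivially preserves the form. Matching the coefficients of $\delta(x-i)$ in~\eqref{eq:sol} produces a closed Duhamel system for $(f_1,\dots,f_n)$ whose weak form is precisely~\eqref{eq:boltz.fi_22}--\eqref{eq:boltz.fi_23} (and which reduces to~\eqref{eq:fi.jump_3} for the label dynamics).

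Since the contraction underlying the proof of Theorem~\ref{Theorem_basic} maps $X_n$ into itself and $X_n$, being closed in the complete space $X$, is itself complete, its unique fixed point lies in $X_n$. By uniqueness in $X$, the solution issuing from the layered datum $f_0(x,v)=\sum_i f_i^{(0)}(v)\,\delta(x-i)$ must therefore have the form $f(x,v,t)=\sum_i f_i(v,t)\,\delta(x-i)$, with the coefficients $f_i$ solving the reduced system subject to $f_i(v,0)=f_i^{(0)}(v)$; continuous dependence on the initial datum is inherited verbatim from Theorem~\ref{Theorem_basic}. The main obstacle I anticipate is the bookkeeping in the invariance step — reducing all $\Omega$-integrals to sums over $\cI_n$ combined with $\R_+$-integrals so that no cross-terms between distinct labels contaminate the layered structure — together with the preliminary check that the stated bound on the $T_i$ genuinely delivers Assumption~\ref{assumptionLipT} for the layered $T$ on the product metric space; both steps are routine but must be carried out carefully.
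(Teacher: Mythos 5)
The paper states this theorem without any proof, so there is no argument of the authors' to compare against; your strategy --- check the hypotheses of Theorem~\ref{Theorem_basic} on $\Omega=\cI_n\times\R_+$ with the metric $|x-y|+|v-w|$, show that the closed subset of layered measures $\sum_i f_i(v,t)\,\delta(x-i)$ is invariant under the Duhamel fixed-point map \eqref{eq:sol}, and conclude by uniqueness --- is exactly the natural route, and it is the one used for the analogous statement in the reference the paper builds on. The reduction of $Q^+(f,f)$ to a closed system for $(f_1,\dots,f_n)$ matching \eqref{eq:boltz.fi_22} is indeed the computational core, and your observation that the equation cited in the statement (\eqref{eq:fi.jump_3}, the pure label-switch system) is really the specialisation of that reduced system is the right reading of what appears to be a slip in the theorem's wording.

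Two points deserve to be made explicit rather than deferred. First, the hypothesis on $T_i$ as printed is a pointwise bound on the value of $T_i$, not a Lipschitz condition in the conditioning variables; taken literally it forces $T_i(\,\cdot\,|(\pr{x},\pr{v}),(\pr{x},\pr{v}))=0$, which is incompatible with \eqref{eq:norm.T}. You implicitly (and correctly) replace it by the $W_1$-Lipschitz condition of Assumption~\ref{assumptionLipT} for each layer; without that replacement the reduction to Theorem~\ref{Theorem_basic} does not go through, so this should be stated as the hypothesis actually used. Second, your remark that $W_1$ between layered measures is ``controlled by the $L^1$-in-$v$ differences of the $T_i$'' needs care: on the unbounded fibre $\R_+$ the $L^1$ distance of densities does not dominate $W_1$ (one needs first-moment-weighted control), and whenever the masses carried by distinct atoms differ, transport across labels at unit cost is unavoidable. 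For verifying Assumption~\ref{assumptionLipT} this is harmless --- each $T_i(\,\cdot\,|\,\pr{\zb},\pr{\yb})$ is a probability measure on its own fibre, so one can transport layer by layer and sum the per-layer Wasserstein bounds --- but the estimate should be phrased that way rather than through $L^1$ norms. With these corrections the proof is complete and correct.
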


\subsection{Quasi-invariant limit}
One of the most interesting issues in the study of kinetic models is the characterisation of the stationary
distributions arising asymptotically for $t\rightarrow \infty$, which depict the emergent behaviour of the system.
The jump process model~\eqref{eq:Boltz.T.fr} hardly allows one to investigate in detail the trend to equilibrium and the profile of the stationary distributions, and the explicit expression of the steady state $f_\infty$ can be inferred only in particular cases \cite{delitala2014KRM,puppo2017KRM,freguglia2017CMS}.


It is widely known that the classic collisional Boltzmann equation~\eqref{eq:Boltz}-\eqref{eq:bin_rules_gen} offers several analytical tools, which often permit to explicitly recover accurate approximations of $f_\infty$ by means of suitable asymptotic procedures.
The basic idea of such procedures is to approximate an integro-differential Boltzmann equation with an appropriate partial differential equation, more amenable to analytical investigations, at least in some regimes of the parameters of the microscopic interactions.
A prominent framework  in which this type of asymptotic analysis is successfully applied to~\eqref{eq:Boltz} is that of the \textit{quasi-invariant interactions}. This concept was first introduced in the kinetic literature on multi-agent systems in~\cite{cordier2005JSP,toscani2006CMS} as a reminiscence of the \textit{grazing collisions} studied in the classical kinetic theory, see~\cite{villani1998ARMA}. This corresponds to introducing a small parameter $\epsilon$ such that the microscopic interaction rule can be written as
\begin{equation}\label{eq:qinv_micro}
v' \approx v+\mathcal{O}(\epsilon)
\end{equation}
and analyzing the dynamics on a longer time scale, setting a new time variable
\begin{equation}\label{eq:tau}
\tau:=\epsilon t
\end{equation}
in order to compensate for the smallness of the interactions.

In this spirit, in \cite{loy2020CMS}, that concerns the investigation of a parallelism between the model \eqref{eq:Boltz}-\eqref{eq:bin_rules_gen} and \eqref{eq:boltz.T}, the authors propose a way to translate the concept of quasi-invariancy, typically used in the context of collision-like Boltzmann equations \eqref{eq:Boltz}-\eqref{eq:bin_rules_gen}, to the language of transition probabilities. The idea is the following. Let $\pr{\Zb},\,\Zb\in\R_+$ be the random variables representing the pre- and post-interaction states, respectively, of an agent, and $\pr{\Zb_\ast}\in\R_+$ the one representing the pre-interaction state of the other agent involved in the interaction. In the probabilistic description via the transition probabilities, we say that interactions are quasi-invariant if, given $0<\epsilon\ll 1$,
\begin{equation}
	\P(\abs{\Zb-\pr{\Zb}}>\epsilon\,\vert\,\pr{\Zb},\,\pr{\Zb_\ast})\leq\epsilon;
	\label{eq:quasi-inv_int}
\end{equation}
in other words, if the post-interaction state is, in probability, close to the pre-interaction state, so that the interactions produce a small transfer of microscopic state between the interacting agents.

In the present framework, in order to have a \textit{quasi-invariant transition probability}, we can introduce rescaled transition probabilities defined by the following transform
\begin{equation}\label{eq:T_eps}
T_{\epsilon}(\zb|\pr{\zb},\pr{\yb})=\mathcal{F}_\epsilon [T](\zb|\pr{\zb},\pr{\yb}), \qquad \tilde{T}_{\epsilon}(\yb|\pr{\zb},\pr{\yb})=\mathcal{F}_\epsilon[\tilde{T}](\yb|\pr{\zb},\pr{\yb})
\end{equation}
where
\[
\mathcal{F}_\epsilon [T] : \mathscr{P}_1(\R_+) \longmapsto \mathscr{P}_1(\R_+)
\]
is a family of operators (for $\epsilon>0$) defined on
the space of the probability measures defined on $\mathbb{R}_+$. We require that $\mathcal{F}_\epsilon$ satisfies the following three properties:
\begin{itemize}
\item[$F1$] $\mathcal{F}_1$ is the identity;
\item[$F2$]
\begin{equation}\label{prop1}
\lim_{\epsilon \rightarrow 0} W_1(\mathcal{F}_\epsilon [T],\delta(\zb'-\zb))=0, \qquad \lim_{\epsilon \rightarrow 0} W_1(\mathcal{F}_\epsilon [\tilde{T}],\delta(\zb'-\zb))=0
\end{equation}
\item[$F3$]
\begin{equation}\label{prop2}
\lim_{\epsilon \rightarrow 0} W_1(\dfrac{\mathcal{F}_\epsilon [T]}{\epsilon},T)=0, \qquad \lim_{\epsilon \rightarrow 0} W_1(\dfrac{\mathcal{F}_\epsilon [\tilde{T}]}{\epsilon},\tilde{T})=0
\end{equation}
\end{itemize}
meaning that $\epsilon=1$ corresponds to the basic regime ($F1$), that for small values of $\epsilon$ the microscopic state tends not to change ($F2$), and that on the long time scale \eqref{eq:tau} the dynamics is ruled by $T$ and $\tilde{T}$ ($F3$) \cite{loy2020CMS}.
An example of properly rescaled transition probabilities is
\begin{equation}\label{eq:T_eps_delta}
T_{\epsilon}(\zb'|\zb,\yb)=(1-\epsilon)\delta(\zb'-\zb)+\epsilon T(\zb'|\zb,\yb), \qquad \tilde{T}_{\epsilon}(\yb'|\zb,\yb)=(1-\epsilon)\delta(\yb'-\yb)+\epsilon \tilde{T}(\yb'|\zb,\yb)
\end{equation}
as introduced in \cite{loy2020CMS}, satisfying the properties $F1$, $F2$, $F3$.

In the following we shall investigate the quasi invariant limit in the three examples illustrated in the previous section.

\subsubsection{Boltzmann-type description of classical binary interaction dynamics}
The quasi-invariant limit procedure is classically applied to the collisional Boltzmann equation with microscopic interaction rules \eqref{eq:Boltz}-\eqref{eq:bin_rules_gen} (or \eqref{eq:Boltz}-\eqref{eq:binary}). Let us introduce a small parameter $0<\epsilon\ll 1$, a time scale \eqref{eq:tau} and a corresponding probability density function $f^{\epsilon}(\tau,v)=f(\tau/\epsilon,v)$. For what we have said in Section 2, as shown in \cite{loy2020CMS}, we have that the quasi-invariant microscopic rule for having the same evolution of the average and the energy of both $f$ and $f^{\epsilon}$ on the $t$- and $\tau$-time scale, respectively, is
\[
\begin{aligned}[b]
& v'=V_{T_{\epsilon}}(v,w)+D_{T_\epsilon}(v,w)\eta, \\
\end{aligned}
\]
and analogously for the microscopic rule for $w'$.
If we consider the quasi-invariant transition probability \eqref{eq:T_eps_delta}, we have that $ V_{T_{\epsilon}}(v,w)=v+\epsilon(I(v,w)-v)$ and $D_{T_\epsilon}(v,w)= \sqrt{\epsilon(1-\epsilon)(I(v,w)-v)^2+\epsilon D(v,w)^2}$ so that the quasi-invariant microscopic rules are
\begin{equation}\label{eq:qinv_bin}
\begin{aligned}[b]
& v'=v+\epsilon(I(v,w)-v)+\sqrt{\epsilon(1-\epsilon)(I(v,w)-v)^2+\epsilon D(v,w)^2} \, \eta), \\
& w'=w+\epsilon(\tilde{I}(v,w)-w)+\sqrt{\epsilon(1-\epsilon)(\tilde{I}(v,w)-w)^2+\epsilon \tilde{D}(v,w)^2} \, \eta_\ast).
\end{aligned}
\end{equation}
Therefore, in terms of transition probabilities, in order to recover the same evolution of the first two moments in the two models \eqref{eq:Boltz}-\eqref{eq:bin_rules_gen} and \eqref{eq:boltz.T}, as shown in the previous section, we must consider $T_\epsilon$ having average $\epsilon(I(v,w)-v)$ and variance $\epsilon(1-\epsilon)(I(v,w)-v)^2+\epsilon D(v,w)^2$ and analogously for $\tilde{T}_{\epsilon}$.
In particular, in order for \eqref{eq:Boltz}-\eqref{eq:bin_rules_gen} and \eqref{eq:boltz.T} to be the same model an appropriate choice is \eqref{eq:T_eps_delta} with $\zb=v$ or
\begin{equation}\label{eq:delta}
\begin{aligned}[b]
&T_\epsilon(v'|v,w)=\delta\Big(v'-(v+\epsilon(I(v,w)-v)+\displaystyle\sqrt{\epsilon(1-\epsilon)(I(v,w)-v)^2+\epsilon D(v,w)^2} \, \eta)\Big), \\
&\tilde{T}_\epsilon(w'|v,w)=\delta\Big(w'-(w+\epsilon(\tilde{I}(v,w)-w)+\sqrt{\epsilon(1-\epsilon)(\tilde{I}(v,w)-w)^2+\epsilon \tilde{D}(v,w)^2} \, \eta_\ast) \Big).
\end{aligned}
\end{equation}
We remark that for $\epsilon=1$ we recover $T$ and $\tilde{T}$, while for $\epsilon \rightarrow 0$ we have that
\[
\begin{aligned}[b]
W_1(T_\epsilon(\cdot|v,w),\delta(\cdot-v))& \le \int_\mathbb{R_+^2}|v'-w'| T_\epsilon(v'|v,w)\delta(w'-v) \, dv' dw' \\
&=|\epsilon(I(v,w)-v)+\sqrt{\epsilon(1-\epsilon)(I-v)^2+\epsilon D^2} \, \eta| \longrightarrow_{\epsilon\rightarrow 0} 0
\end{aligned}
\]
and it can also be easily verified that $F3$ holds true for $T_\epsilon$ defined by \eqref{eq:delta},
and analogously for $\tilde{T}_{\epsilon}$.
If we consider for a moment the symmetric case \eqref{eq:symm}, for simplicity, we have that plugging~\eqref{eq:tau} into~\eqref{eq:boltz.T} and considering \eqref{eq:T_eps} satisfying \eqref{prop2}, letting $\epsilon$ to $0^+$ yields
\begin{equation}
	\partial_\tau f=\int_{\R_+}\int_{\R_+}T(v\,\vert\,\pr{v},\,\pr{w})f(\tau,\,\pr{v})f(\tau,\,\pr{w})\,d\pr{v}\,d\pr{w}-f,
	\label{eq:boltz.g-T}
\end{equation}
which is structurally identical to the very general equation~\eqref{eq:boltz.T} and does not give any further information. Therefore, in spite of the quasi-invariant structure of the interactions, it is in principle not easier to extract from~\eqref{eq:boltz.g-T} any more detailed information about the asymptotic trends.

\noindent Instead, if we consider \eqref{eq:boltz.T} with \eqref{eq:delta} and \eqref{eq:tau}, then we can perform the quasi-invariant limit, as in \cite{loy2020CMS}, and we obtain letting $\epsilon$ to $0^+$
\begin{align}
	\begin{aligned}[b]
		\partial_\tau f &= \frac{1}{2}\partial^2_v\left\{\left[\int_{\R_+}\left(\left(V_T(v,\,w)-v\right)^2
			+D_T^2(v,\,w)\right)f(\tau,\,w)\,dw\right]f\right\} \\
		&\phantom{=} -\partial_v\left[\left(\int_{\R_+} V_T(v,\,w)f(\tau,\,w)\,dw-v\right)f\right],
	\end{aligned}
	\label{eq:FP_0}
\end{align}
where $V_T,D_T$ are the average and variance of $T$ defined in Section \ref{Boltz_sec}.

\subsubsection{Label switch process caused by binary interactions}
Let us now consider the transfer process described by the kinetic equation \eqref{eq:fi.jump_3}. In order to write a quasi-invariant regime, we can express the fact that, given a collision, individuals have a small probability of jumping, i.e.
\begin{equation}\label{eq:P_eps}
P_{\epsilon_{xy}}^{x'y'}=\epsilon P_{xy}^{x'y'} \quad \text{if} \quad (x,y)\neq (x',y'), \quad   P_{\epsilon_{xy}}^{x'y'}=1-\epsilon \quad \text{if} \quad (x,y)= (x',y').
\end{equation}
Then, considering a long time scale \eqref{eq:tau} we have that \eqref{eq:fi.jump_3} is
\begin{equation}
	\dfrac{d f_s^\epsilon}{d\tau}=\dfrac{1}{\epsilon}\sum_{j,k,l=1, \\ j\neq s, k\neq l}^{n}\left(P_{\epsilon_{jk}}^{sl}\lambda_{{jk}}f_j^\epsilon-P_{\epsilon_{sk}}^{jl}\lambda_{sk}f_s^\epsilon\right)f_k^\epsilon, \qquad s=1,\,\dots,\,n.
	\label{eq:fi.jump_5}
\end{equation}
Plugging \eqref{eq:P_eps} in \eqref{eq:fi.jump_5} we obtain
\[
\dfrac{d f_s^{\epsilon}}{d\tau}=\sum_{j,k,l=1}^{n}\left(\beta_{jk}^{sl}f_j^\epsilon f_k^\epsilon-\beta_{sk}^{jl}f_s^\epsilon f_k^\epsilon \right), \qquad s=1,\,\dots,\,n
\]
which is structurally identical to the very general equation \eqref{eq:fi.jump_3} with $\tau$ instead of $t$, meaning that it is the probability of jumping that rules the dynamics on the long time scale.
%

\subsubsection{Interacting particles with label switch and exchange of physical quantities}

Let us now consider the case in which the binary interactions lead both to a transfer and to an exchange of the physical quantity $v$. Therefore, for the two processes we shall consider a quasi-invariant regime given by \eqref{eq:qinv_micro}-\eqref{eq:P_eps}. In this case the transition probability \eqref{eq:T_lab.bin} may be rescaled as
\begin{equation}\label{eq:qinv_T}
T_{\epsilon}((x',v')|(x,v),(y,w))=P_{\epsilon_{xy}}^{x'y'}\delta\Big(v'-(v+\epsilon(I_{xy}-v)+\sqrt{\epsilon(1-\epsilon)(I_{xy}-v)^2+\epsilon D_{xy}^2} \, \eta) \Big)
\end{equation}
and analogously for $\tilde{T}_{\epsilon}$. Let us consider the symmetric case for simplicity of notation, bearing in mind that the asymmetric case can be treated analogously. Plugging the latter in \eqref{eq:boltz.fi_23} and considering the re-scaling \eqref{eq:tau} and reminding \eqref{eq:P_eps}, if we let $\epsilon \rightarrow 0^+$ we obtain
\begin{align}\label{eq:q.inv_tr_ex}
	\begin{aligned}[b]
	&\frac{d}{d\tau}\int_{\R_+}\varphi(v)f_s(\tau,v)\,dv =\\
	&=\int_{\R_+^2}\sum_{j,k,l=1, j\neq s \vee k\neq l}^{n}\varphi(v)\big(\beta_{jk}^{sl}f_j(\tau,v)-\beta_{sk}^{jl}f_s(\tau,v)\big)f_k(\tau,w) \, dv dw
	\end{aligned}
\end{align}
that means that the dynamics is ruled by the label switches and gives no further information. In case of symmetry, it is the symmetric form of \eqref{eq:boltz.fi_23}.


Let us now consider a different regime and, in particular, let us only consider a quasi-invariant exchange rule, i.e. \eqref{eq:qinv_micro}.
Let us then rescale \eqref{eq:boltz.fi_23} with \eqref{eq:tau} and let us consider the quasi-invariant transition probability
\begin{equation}\label{eq:trans_qinv.rules}
T_{\epsilon}((x',v')|(x,v),(y,w))=P_{{xy}}^{x'y'}\delta\Big(v'-(v+\epsilon(I_{xy}-v)+\sqrt{\epsilon(1-\epsilon)(I_{xy}-v)^2+\epsilon D_{xy}^2} \, \eta) \Big),
\end{equation}
i.e. the exchange of the physical quantity is actually quasi-invariant, whilst the label-switch process is not.
 We obtain
\begin{align}\label{eq:fs_qinv}
	\begin{aligned}[b]
	&\frac{d}{d\tau}\int_{\R_+}\varphi(v)f_s(\tau,v)\,dv =\\
	& = \ave{\dfrac{1}{\epsilon}\int_{\R_+^2}\sum_{j,k,l=1}^{n}\big(\beta_{jk}^{sl}\varphi(v+\epsilon(I_{jk}-v)+\sqrt{\epsilon(1-\epsilon)(I_{jk}-v)^2+\epsilon D_{jk}^2}\eta) f_j(\tau,v)\\
	&\phantom{=\dfrac{1}{\epsilon}\int_{\R_+^2}\sum_{j,k,l=1}^{n}\big(\beta_{jk}^{sl}\varphi(v+\epsilon(I_{jk}-v)+}-\beta_{sk}^{jl}\varphi(v)f_s(\tau,v)\big)f_k(\tau,w) \, dv dw}  .
	\end{aligned}
\end{align}
As $\epsilon$ is small, we can Taylor expand $$\begin{aligned}[b]
\ave{\varphi(v+\epsilon(I_{jk}(v,w)-v)+\sqrt{\epsilon(1-\epsilon)(I_{jk}-v)^2+\epsilon D_{jk}^2}\eta)}=\\
=\varphi(v)+\epsilon(I_{jk}(v,w)-v)\varphi'(v)+\dfrac{1}{2}\varphi''(v)\epsilon \left((1-\epsilon)(I_{jk}-v)^2+ D_{jk}^2\right)+\mathcal{O}(\epsilon^2)
\end{aligned}
$$
and plugging the latter in \eqref{eq:fs_qinv} we have that
\begin{align*}
	\begin{aligned}[b]
	&\frac{d}{d\tau}\int_{\R_+}\varphi(v)f_s(\tau,v)\,dv =\\
	&=\int_{\R_+^2}\sum_{j,k,l=1}^{n}\beta_{jk}^{sl}\big(\varphi'(v)(I_{jk}-v)+\dfrac{1}{2}((1-\epsilon)(I_{jk}-v)^2+ D_{jk}^2)\varphi''(v)\big)f_j(\tau,v)f_k(\tau,w) \, dv dw \\
	&+\dfrac{1}{\epsilon}\int_{\R_+}\sum_{j,k,l=1, j\neq s}^{n}\varphi(v)\big(\beta_{jk}^{sl}f_j(\tau,v)-\beta_{sk}^{jl}f_s(\tau,v)\big)\rho_k(\tau) \,dv
	\end{aligned}
\end{align*}
where we remind that $\rho_k$ is the mass of the $k$-th population.
Let us now consider an expansion for the probability density function of the whole population $f$
\begin{equation}\label{eq:expansion}
f(\tau,x,v)=f^{(0)}(\tau,x,v)+\epsilon f^{(1)}(\tau,x,v) +\mathcal{O}(\epsilon^2)
\end{equation}
where the zero-th and first order moments satisfy
\begin{equation}\label{Hilbert.cons}
\begin{aligned}[b]
\int_{\cI_n\times \mathbb{R}_+}f  dx dv=\rho^{(0)}:=\int_{\cI_n\times \mathbb{R}_+}f^{(0)}  \, dx dv, \qquad \rho^{(1)}:=\int_{\cI_n\times \mathbb{R}_+}f^{(1)}  \, dx dv=0,\\
\int_{\cI_n\times \mathbb{R}_+}f v dx dv=M^{(0)}:=\int_{\cI_n\times \mathbb{R}_+}f^{(0)} v \, dx dv, \qquad M^{(1)}:=\int_{\cI_n\times \mathbb{R}_+}f^{(1)} v \, dx dv=0.
\end{aligned}
\end{equation}
For each $f_i$ this translates into
\[
f_i(\tau,v)=f_i^{(0)}(\tau,v)+\epsilon f_i^{(1)}(\tau,v)+\mathcal{O}(\epsilon^2)
\]
and \eqref{Hilbert.cons} translates to
\begin{equation}\label{eq:Hilbert}
\rho^{(0)}=\sum_{i=1}^n \rho_i^{(0)}=\int_{\cI_n\times \mathbb{R}_+}f dx dv, \qquad \sum_{i=1}^n \rho_i^{(1)}=0.
\end{equation}
Comparing equal orders of $\epsilon$ and supposing symmetry, we obtain
\begin{equation}\label{eq:f_0_1}
\begin{aligned}[b]
&\dfrac{1}{\epsilon}\int_{\R_+}\sum_{j,k,l=1, j\neq s}^{n}\varphi(v)\big(\beta_{jk}^{sl}f_j^{(0)}(\tau,v)-\beta_{sk}^{jl}f_s^{(0)}(\tau,v)\big)\rho_k^{(0)}(\tau) dv=0, \\
	\end{aligned}
\end{equation}
so that
\begin{equation}\label{eq:f_0_2}
\begin{aligned}[b]
&f_s^{(0)}(\tau,v) =\dfrac{\sum_{j,k,l=1, j\neq s}^{n}\beta_{jk}^{sl}f_j^{(0)}(\tau,v)\rho_k^{(0)}(\tau)}{\sum_{j,k,l=1, j\neq s}^{n}\beta_{sk}^{jl}\rho_k^{(0)}(\tau)}, \\
	\end{aligned}
\end{equation}
while, at the first order
\begin{equation}\label{eq:f_1}
	\begin{aligned}[b]
	&\frac{d}{d\tau}\int_{\R_+}\varphi(v)f_s^{(0)}(\tau,v)\,dv =\\
	&=\int_{\R_+^2}\sum_{j,k,l=1}^{n}\beta_{jk}^{sl}\big(\varphi'(v)(I_{jk}(v,w)-v)+\frac12 ((I_{jk}(v,w)-v)^2+ D_{jk}^2)\varphi''(v)\big) f_j^{(0)}(\tau,v) f_k^{(0)}(\tau,w) \, dv dw \\
&+\int_{\R_+}\sum_{j,k,l=1, j\neq s}^{n}\varphi(v)\big(\beta_{jk}^{sl}f_j^{(0)}(\tau,v)-\beta_{sk}^{jl}f_s^{(0)}(\tau,v)\big)\rho_k^{(1)}(\tau) dv \\
&+\int_{\R_+}\sum_{j,k,l=1, j\neq s}^{n}\varphi(v)\big(\beta_{jk}^{sl}f_j^{(1)}(\tau,v)-\beta_{sk}^{jl}f_s^{(1)}(\tau,v)\big)\rho_k^{(0)}(\tau) dv,
	\end{aligned}
\end{equation}
from which we can obtain, resorting to the strong form thanks to integration by parts, a Fokker-Planck-type equation with a reaction term for each $f_s^{(0)}$ that is
\begin{equation}\label{eq:f_1_1}
	\begin{aligned}[b]
	\partial_\tau f_s^{(0)}(\tau,v) &=\\
	&-\partial_v\sum_{j,k,l=1}^{n}\beta_{jk}^{sl}\int_{\R_+}(I_{jk}(v,w)-v)f_k^{(0)}(\tau,w) \, dw f_j^{(0)}(\tau,v)\\
	 &+\partial^2_{vv}\dfrac{1}{2}\sum_{j,k,l=1}^{n}\beta_{jk}^{sl}\int_{\R_+}((I_{jk}(v,w)-v)^2+ D_{jk}^2)f_k^{(0)}(\tau,w) \, dw f_j^{(0)}(\tau,v)\\
	&+\sum_{j,k,l=1, j\neq s}^{n}\big(\beta_{jk}^{sl}f_j^{(0)}(\tau,v)-\beta_{sk}^{jl}f_s^{(0)}(\tau,v)\big)\rho_k^{(1)}(\tau)\\
	&+\sum_{j,k,l=1, j\neq s}^{n}\big(\beta_{jk}^{sl}f_j^{(1)}(\tau,v)-\beta_{sk}^{jl}f_s^{(1)}(\tau,v)\big)\rho_k^{(0)}(\tau).
	\end{aligned}
\end{equation}
The latter reaction terms also involve the first order corrections $f_i^{(1)}, \, i=1,...,n$.
In order to find univocally the solutions $f_s^{(0)},f_s^{(1)}$ to \eqref{eq:f_0_2} and \eqref{eq:f_1_1} satisfying \eqref{eq:Hilbert}, we need a number of conditions (to be looked for example in conserved quantities) that is equal to the number of degrees of freedom.

\section{Kinetic model for international trade allowing transfer of individuals}
In this section, we are going to rephrase with the current framework the kinetic model for international trade allowing  transfer of individuals investigated in \cite{bisi2021PhTB}, where the author presents a model of interacting individuals divided into two subpopulations and allowed, by means of binary interactions, to exchange wealth and to migrate to the other subgroup. Here, then, we have that $n=2$, the physical quantity $v\in\R_+$ is the wealth, while the label $x\in \cI_2$ denotes the subgroup. Note that, in  the present work, we are only going to consider binary interactions giving rise to both exchanges of the wealth and transfer simultaneously.
\subsection{From the microscopic to the macroscopic model}\label{sec:model}
For what concerns the exchange of the physical quantity $v$, we are going to consider simple linear microscopic rules \eqref{eq:bin_rules_gen}
with
\begin{equation}\label{eq:micro.rule_2}
I_{xy}(v,w)=(1-\omega_{x})v+\omega_{y}w, \qquad D_{xy}=\zeta_{xy}v
\end{equation}
where $\omega_i \in [0,1], \, i\in\cI_2$ and we are dealing with symmetric interactions.
We consider possible transfers given by
\begin{equation}\label{eq:transfer_mod}
\begin{aligned}[b]
\textrm{(a)} \qquad 1 + 1 \rightarrow 1 + 2, \qquad \textrm{(b)} \qquad 2 + 2 \rightarrow 1 + 2,\\
\textrm{(c)} \qquad 1 + 2\rightarrow 1 + 1, \qquad \textrm{(d)} \qquad 1 + 2 \rightarrow 2 + 2,
\end{aligned}
\end{equation}
therefore only one of the two interacting agents moves to the other subgroup.
The latter implies that the only non-vanishing values of $P_{ij}^{kl}$ correspond to the 4-plets
\begin{equation}\label{eq:P_mod}
 (i,j,k,l)  \in \lbrace (1,1,1,2), (1,1,2,1), (2,2,1,2),(2,2,2,1),(1,2,1,1),(1,2,2,2),(2,1,1,1),(2,1,2,2)\rbrace.
\end{equation}


The kinetic equation describing this microscopic dynamics is \eqref{eq:boltz.fi_22} with \eqref{eq:T_lab.bin}, where $P_{ij}^{kl}$ is defined according to \eqref{eq:P_mod} and the microscopic exchange dynamics by \eqref{eq:micro.rule_2}.
The evolution of the mass of each subpopulation is given by setting $\varphi=1$ in \eqref{eq:boltz.fi_22}, with the prescribed dynamics, for $s=1,2$ and it results in
\begin{align}\label{eq:masse}
\begin{aligned}[b]
\partial_t \rho_1=(\beta_{22}^{12}\rho_2+\beta_{12}^{11}\rho_1)\rho_2-(\beta_{11}^{12}\rho_1+\beta_{12}^{22}\rho_2)\rho_1\\
\partial_t \rho_2=(\beta_{11}^{12}\rho_1+\beta_{12}^{22}\rho_2)\rho_1-(\beta_{22}^{12}\rho_2+\beta_{12}^{11}\rho_1)\rho_2
\end{aligned}
\end{align}
while setting $\varphi=v$ in \eqref{eq:boltz.fi_22} gives the evolution of the first moments for $s=1,2$
\begin{align}\label{eq:momenti}
\begin{aligned}[b]
\partial_t M_1=\left(\beta_{12}^{11}\rho_1+\beta_{22}^{12}\rho_2\right)M_2
-\left(\beta_{12}^{22}\rho_2+\beta_{11}^{12}\rho_1\right)M_1,\\
\partial_t M_2=\left(\beta_{12}^{22}\rho_2+\beta_{11}^{12}\rho_1\right)M_1-\left(\beta_{12}^{11}\rho_1+\beta_{22}^{12}\rho_2\right)M_2.
\end{aligned}
\end{align}
As a consequence, the averages of the wealth of population 1 and 2 evolve as
\begin{align}\label{medie}
\begin{aligned}[b]
\partial_t m_1=\dfrac{\rho_2}{\rho_1}\left( \beta_{22}^{12}\rho_2+\beta_{12}^{11}\rho_1\right)(m_2-m_1),\\
\partial_t m_2=\dfrac{\rho_1}{\rho_2}\left( \beta_{11}^{12}\rho_1+\beta_{12}^{22}\rho_2\right)(m_1-m_2).
\end{aligned}
\end{align}
We remark that we have assumed symmetry in the interaction rates, i.e.
\[
\beta_{ii}^{12}=\beta_{ii}^{21}, \quad \beta_{12}^{ii}=\beta_{21}^{ii}, \quad \forall i=1,2.
\]
We observe that the total mass and average
\[
\bar{\rho}:=\rho_1+\rho_2=1, \qquad \bar{M}:=M_1+M_2
\]
are conserved in time.
Regarding the stationary states of the masses, we have that
\[
\rho_2^{\infty}=\alpha \rho_1^{\infty}
\]
where
\[
\alpha=\dfrac{-(\beta_{12}^{11}-\beta_{12}^{22})+\sqrt{(\beta_{12}^{11}-\beta_{12}^{22})^2+4\beta_{11}^{12}\beta_{22}^{12}}}{2\beta_{22}^{12}},
\]
and, taking into account that the sum of the two densities is constant, we have that
\[
\rho_1^{\infty}=\dfrac{\bar{\rho}}{1+\alpha}, \qquad \rho_2^{\infty}=\dfrac{\alpha\bar{\rho}}{1+\alpha}.
\]
Therefore
\[
\rho_1^{\infty}=\rho_2^{\infty}=\dfrac{\bar{\rho}}{2} \qquad \textrm{if and only if} \qquad \alpha=1.
\]
Bearing in mind that $\beta_{ij}^{kl}=P_{ij}^{kl}\lambda_{ij}$, the latter condition is satisfied if $P_{12}^{22}=P_{12}^{11}=0.5$ and $\lambda_{11}=\lambda_{22}$, which means that the probability for interacting agents with different labels and going to the same subgroup is the same, and the frequency of interaction among agents of the same subgroup is the same for all subgroups.
To this regard, we observe that, as the stationary state only depends on $\alpha$, then there may be a switch in the population size ($(\rho_2^{\infty}-\rho_1^{\infty})(\rho_2(0)-\rho_1(0))<0$) if
\begin{equation}\label{eq:mass_switch}
(\rho_2(0)-\rho_1(0))(\alpha-1)<0.
\end{equation}
For what concerns the average, the sufficient and necessary condition to be met at the stationary state is
\begin{equation}\label{eq:medie_staz}
m_1^{\infty}=m_2^{\infty}=:m^{\infty}
\end{equation}
for every choice of the parameters.
The latter implies that $M_2^{\infty}>M_1^{\infty}$ if and only if $\alpha>1$. Moreover, because of conservation of mass and total momentum, we have that
\begin{equation}\label{eq:media_staz}
m^{\infty}=\rho_1(0)m_1(0)+\rho_2(0)m_2(0)
\end{equation}
that implies that the final average wealth is closer to the initial average wealth of the subgroup that was more populated at $t=0$.

\subsection{Quasi-invariant limit}

If we assume, for simplicity of notation, that the probability of transfer towards the $i$-th subgroup is independent of the countries of the interacting agents, i.e.
\begin{equation}\label{assumption}
\beta_1^2 := \beta_{11}^{12} = \beta_{12}^{22}, \qquad \beta_2^1 := \beta_{22}^{12} = \beta_{12}^{11}
\end{equation}
we have that
\begin{equation}
\rho_2^\infty=\alpha \rho_1^\infty, \qquad M_2^\infty=\alpha M_1^\infty,
\end{equation}
where $\alpha=\beta_1^2/\beta_2^1$,
and, as $\bar{M}$ and $\bar{\rho}=1$ are conserved quantities, we have that
\[
\rho_1^{\infty}=\dfrac{\bar{\rho}}{1+\alpha}, \qquad M_1^{\infty}=\dfrac{\bar{M}}{1+\alpha}.
\]
Let us now consider the quasi-invariant regime defined by the transition probability
\begin{equation}\label{eq:trans_qinv.rules_noncons}
T_{\epsilon}((x',v')|(x,v),(y,w))=P_{{xy}}^{x'y'}\delta\Big(v'-(v+\epsilon(I_{xy}(v,w)-v)+\sqrt{\epsilon D_{xy}(v,w)^2}\eta) \Big).
\end{equation}
The latter, even if it satisfies the requirements F1, F2, F3 prescribed in Section 3.2, differently from \eqref{eq:trans_qinv.rules}, does not guarantee the same evolution of the energy in the quasi-invariant regime.
By expanding the distribution functions in powers of $\epsilon$, imposing that the globally invariant quantities (zero-th and first order moments) remain unexpanded, we get that the constraints \eqref{eq:Hilbert} are
\begin{equation}\label{eq:Hilbert_2}
\rho_1^{(1)}+\rho_2^{(1)}=0, \qquad M_1^{(1)}+M_2^{(1)}= 0.
\end{equation}
Therefore, we have two degrees of freedom and we can determine the values of $\rho_1^{(1)}, M_1^{(1)}$ as we have two conserved quantities.
Following the same procedure as before, we find that \eqref{eq:f_0_2} now is
\begin{equation}\label{eq:f_0}
f_1^{(0)}=\dfrac{1}{\alpha}f_2^{(0)}\,.
\end{equation}
Then we have that $\rho_2^{(0)}=\alpha \rho_1^{(0)}$ and $M_2^{(0)}=\alpha M_1^{(0)}$, i.e. $\rho_i^{(0)}=\rho_i^\infty, M_i^{(0)}=M_i^\infty, \, i=1,2$, which means that the masses and averages of order zero correspond to the equilibrium ones.
Therefore
\begin{equation}\label{eq:rho_M_0}
\rho_1^{(0)}=\dfrac{\bar{\rho}}{1+\alpha}, \qquad M_1^{(0)}=\dfrac{\bar{M}}{1+\alpha}\,.
\end{equation}
At the first order \eqref{eq:f_1_1} for $s=1$ (for $s=2$ an analogous result applies) specialises into
\begin{equation}\label{eq:FP}
\begin{aligned}[b]
\partial_\tau f_1^{(0)} &=
-\partial_v\Big(\beta_{1}^{2}(\omega_1 M_1^{(0)}-\omega_2v\rho_1^{(0)})f_1^{(0)}+\beta_{1}^{2}(\omega_2M_2^{(0)}-\omega_2v\rho_2^{(0)})f_1^{(0)}\Big.\\
\Big.&\phantom{ =
-\partial_v (\beta_{1}^{2}}+\beta_{1}^{2}(\omega_1M_1^{(0)}-\omega_1v\rho_1^{(0)})f_1^{(0)}+\beta_{2}^{1}(\omega_2M_2^{(0)}-\omega_1v\rho_2^{(0)})f_1^{(0)}\Big)\\
&\phantom{=}+\partial^2_{v^2}\dfrac{1}{2}\Big(\beta_{1}^{2}\left[\zeta_{12}^2v^2\right]f_1^{(0)}\rho_1+\beta_{1}^{2}\left[\zeta_{22}^2v^2\right]f_1^{(0)}\rho_2\Big.\\
\Big. &\phantom{ =
-\partial_v (\beta_{1}^{2}}+\beta_{1}^{2}\left[\zeta_{11}^2v^2\right]f_1^{(0)}\rho_1^{(0)}+\beta_{2}^{1}\left[\zeta_{12}^2v^2\right]f_1^{(0)}\rho_2^{(0)}\Big) \\
	&\phantom{=}+\left(f_2^{(1)}\beta_2^{1}-f_1^{(1)}\beta_1^2\right)\bar{\rho}
\end{aligned}
\end{equation}
where use of \eqref{eq:f_0} has been made.
Integrating \eqref{eq:FP} over $\mathbb{R}_+$ and \eqref{eq:FP}  multiplied by $v$ over $\mathbb{R}_+$, along with the conditions $f_1(0)=0$ and $\lim_{v\rightarrow +\infty} f_1(v)=0$, and remembering \eqref{eq:Hilbert_2}, we discover that both $\rho_1^{(1)}=\rho_2^{(1)}=0$ and
$M_1^{(1)}=M_2^{(1)}=0$.
This implies the fact that both the masses and the averages of $f_1$ and $f_2$ are at the equilibrium even at $\mathcal{O}(\epsilon)$ accuracy.
Using relations \eqref{eq:f_0}-\eqref{eq:rho_M_0} in \eqref{eq:FP},  we obtain
\begin{equation}\label{eq:FP_simpl}
\begin{aligned}[b]
\partial_\tau f_1^{(0)} &=
-\dfrac{\beta_1^2}{1+\alpha}\partial_v\left[\Big((\omega_1 \bar{M}-\omega_2v\bar{\rho})+\alpha(\omega_2\bar{M}-\omega_2v\bar{\rho})+(\omega_1\bar{M}-\omega_1v\bar{\rho})+(\omega_2\bar{M}-\omega_1v\bar{\rho})\Big)f_1^{(0)}\right]\\
&\phantom{=}+\dfrac{\zeta^2\beta_1^2\bar{\rho}}{2(1+\alpha)}\partial^2_{v^2}\Big(v^2(3+\alpha)f_1^{(0)}\Big) \\
	&\phantom{=}+\left(f_2^{(1)}\beta_2^{1}-f_1^{(1)}\beta_1^2\right)\bar{\rho}
\end{aligned}
\end{equation}
where we have also assumed that the stochastic fluctuations are the same in each kind of interaction, i.e.
\[
\zeta_{ij}=\zeta, \qquad \forall i,j \in \cI_2.
\]
Equation \eqref{eq:FP_simpl} is a Fokker-Planck equation with reaction term, where the advection-diffusion part (first and second line) only involves $f_1^{(0)}$ and its (known) mass and average \eqref{eq:rho_M_0}, while the reaction term (third line in \eqref{eq:FP_simpl}) only depends on $f_1^{(1)},f_2^{(1)}$.
As $\rho_1^{(1)}=\rho_2^{(1)}=0$ and $M_1^{(1)}=M_2^{(1)}=0$, the reaction term does not influence the mass and average of $f_1^{(0)}$. It is therefore reasonable to look for the stationary solution to the Fokker-Planck equation without reaction term, i.e.
\begin{equation}\label{eq:FP_simpl_2}
\begin{aligned}[b]
\partial_\tau \tilde{f}_1^{(0)} &=
-\dfrac{\beta_1^2}{1+\alpha}\partial_v\left[\Big((\omega_1 \bar{M}-\omega_2v\bar{\rho})+\alpha(\omega_2\bar{M}-\omega_2v\bar{\rho})+(\omega_1\bar{M}-\omega_1v\bar{\rho})+(\omega_2\bar{M}-\omega_1v\bar{\rho})\Big)\tilde{f}_1^{(0)}\right]\\
&\phantom{=}+\dfrac{\zeta^2\beta_1^2\bar{\rho}}{2(1+\alpha)}\partial^2_{v^2}\Big(v^2(3+\alpha)\tilde{f}_1^{(0)}\Big)
\end{aligned}
\end{equation}
that is \eqref{eq:FP_simpl} where we neglect the third term on the right hand side, as this does not contribute to a variation of mass and average of $f_1^{(0)}$. Therefore we obtain
\begin{equation}\label{stato_staz}
\tilde{f}_1^{(0)}=\dfrac{\bar{\rho}}{1+\alpha}v^{-2\left(1+\dfrac{\gamma}{2 }\right)}\exp^{-\dfrac{\bar{M}}{\bar{\rho}}\dfrac{\gamma}{v}}, \qquad \gamma=\dfrac{B}{D}, \quad B=2\omega_1+\omega_2(1+\alpha), \quad D=\zeta^2 \dfrac{3+\alpha}{2},
\end{equation}
The mass and average can be verified to be $\bar{\rho}/(1+\alpha) $ and $\dfrac{\bar{M}}{1+\alpha}$ respectively, while the energy is $\dfrac{\bar{M}}{(1+\alpha)(\gamma-1)}$.
Moreover, we can determine the \textit{Pareto index} of the first population that is (approximated by)
\begin{equation}
PI_1=\gamma+1=\dfrac{2\omega_1+\omega_2(1+\alpha)}{\frac{\zeta^2}{2}(3+\alpha)}+1
\end{equation}
that depends on the trading propensity of both populations $\omega_1,\omega_2$, on the ratio $\alpha$ that involves the rates $\beta_i^j$ and on the stochasticity $\zeta^2$. Since, according to (\ref{eq:f_0}), $\tilde{f}_2^{(0)}= \alpha\, \tilde{f}_1^{(0)}$, both populations have the same (approximate) Pareto index.

\noindent We remark that considering only one population corresponds to setting $\omega_1=\omega_2=\omega$ and $\beta_1^2=\beta_2^1$.
If $\beta_2^1=\beta_1^2$, then there is no reaction term in \eqref{eq:FP_simpl} so that the stationary state \eqref{stato_staz} is exact and $\alpha=1$ that implies $f_1^{(0)}=f_2^{(0)}$.
Moreover, the Pareto index (now exact) is
\[
\dfrac{2\omega}{\zeta^2}+1,
\]
that coincides with the one commonly obtained from a kinetic model for a single population \cite{cordier2005JSP}.
Note that, even keeping $\alpha \not= 1$, in the case $\omega_1 = \omega_2$ the Pareto index of each group coincides with that relevant to a single population.

%
%

\section{Numerical tests}
In this section we present some numerical tests that illustrate the dynamics of the model that we have introduced in the previous section. We integrate the kinetic equation \eqref{eq:boltz.fi_23} numerically using a modified version of
the Nanbu-Babovski Monte Carlo algorithm (see Algorithm \ref{alg:nanbu} in the Appendix). The latter is based on a direct implementation of the time discrete stochastic microscopic process \eqref{eq:micro.rules.gen}-\eqref{eq:bernoulli}-\eqref{def:g_gt} with \eqref{eq:T_lab.bin} as illustrated in Section \ref{subsec:int_and_switch} for $N$ agents, which in the limit $\Delta t \rightarrow 0^+$ produces the kinetic equation \eqref{eq:boltz.fi_23}. In particular, we shall consider the microscopic rules \eqref{eq:micro.rule_2}-\eqref{eq:P_mod}, with $n=2$. We perform an empirical statistics of the $N$ simulated agents and define the distribution functions $f_1^{MC},f_2^{MC}$, their masses $\rho_1^{MC},\rho_2^{MC}$ and first moments $M_1^{MC}, M_2^{MC}$.

\subsection{Interacting particles with label switch and exchange of physical quantities}

In all numerical tests we consider $\rho_1(0)=0.9$ and $\rho_2(0)=0.1$, i.e. the subgroup labelled with $x=1$ is initially more populated.
We remark that the assumed symmetry in the process \eqref{eq:P_mod} implies $P_{11}^{11}=P_{11}^{22} = P_{22}^{22} = P_{22}^{11}=0$ and then $P_{11}^{12}=P_{11}^{21}=0.5$, being $P_{ij}^{kl}$ a conditional probability. In the following numerical tests, we shall always consider $\omega_1=\omega_2=0.5$, as the values of $\omega_1$ and $\omega_2$ do not affect the averages' evolution and stationary state \eqref{medie}.
Moreover, we fix
\[
\lambda_{11}=1
\]
and we vary the other parameters.

In the first set of simulations ({\bf Test 1} in the following), we choose $\lambda_{22}=10, \lambda_{12}=1$, $P_{12}^{11}=0.5, P_{12}^{22}=0.5$ and we consider two different initial conditions for the distributions $f_1$ and $f_2$. In {\bf Case A} we have that the first population is poorer than the second population at time $t=0$, i.e.
\begin{equation}\label{def:f12_0}
f_1(v,0)=\rho_1(0)\mathbb{1}_{[0,1]}(v), \qquad f_2(v,0)=\rho_2(0)\dfrac{1}{10}\mathbb{1}_{[5,15]}(v)
\end{equation}
while in {\bf Case B} we invert the initial wealths
\[
f_1(v,0)=\rho_1(0)\dfrac{1}{10}\mathbb{1}_{[5,15]}(v), \qquad f_2(v,0)=\rho_2(0)\mathbb{1}_{[0,1]}(v),
\]
i.e. the second population is poorer than the first population at time $t=0$.
We report the results in Figure \ref{fig1}.
First of all, we observe that this choice of parameters prescribes $\alpha<1$, that, as showed by the macroscopic equations \eqref{eq:masse}, implies $\rho_1^{\infty}>\rho_2^{\infty}$ (see Figure \ref{fig1}(a)). As forecast by theoretical results, the final average wealth $m^{\infty}=m_1=m_2$ is closer to the initial wealth of the initially more populated subgroup: then $m^{\infty}$ is smaller in case A) and larger in case B) (see Figure \ref{fig1} (b)). This implies a different behaviour of the first moment of both $f_1$ and $f_2$ (see Figure \ref{fig1} (c)): while in case B) the first population remains the richer one as it is the one that is initially more populated, in scenario A), the mean wealth is inverted as the first population becomes reacher. In each case we compare the evolution of the macroscopic quantities $\rho_1^{MC},\rho_2^{MC},M_1^{MC},M_2^{MC},m_1^{MC},m_2^{MC}$ as prescribed by the microscopic model \eqref{eq:micro.rules.gen}-\eqref{eq:bernoulli}-\eqref{def:g_gt}-\eqref{eq:T_lab.bin} with \eqref{eq:micro.rule_2}-\eqref{eq:P_mod} and the ones whose evolution is given by the derived equations \eqref{eq:masse}-\eqref{eq:momenti}-\eqref{medie} for the macroscopic quantities $\rho_1,\rho_2,M_1,M_2,m_1,m_2$. Being $\Delta t=1e-2$ and $N=10^6$, we observe a very good agreement between the solution of the microscopic model and the one of the macroscopic model.
The integration of the kinetic equations also allows to approximate numerically the distribution functions $f_1$ and $f_2$ that we report at the equilibrium in Figure
\ref{fig1}(d) in both cases A) and B). In Figure \ref{fig1} (e)-(f) we report the time evolution of $f_1^{MC}$ and $f_2^{MC}$ in case A) with initial condition \eqref{def:f12_0}.  In Fig. \ref{fig1} (e)-(f) we report the time evolution for $f_1^{MC}, f_2^{MC}$.


\begin{figure}[!htbp]
\centering
\subfigure[]{\includegraphics[width=0.48\textwidth]{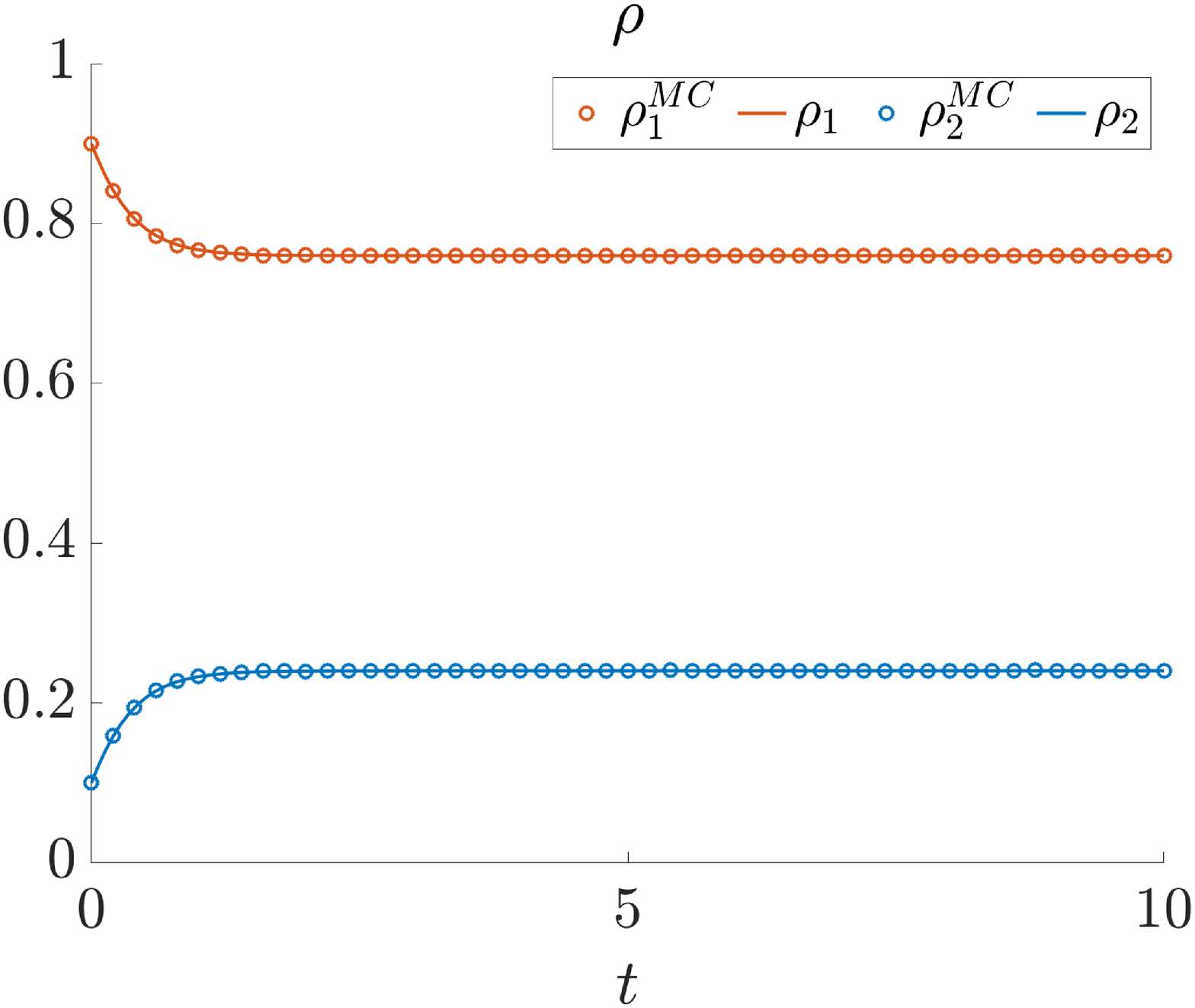}}
\subfigure[]{\includegraphics[width=0.48\textwidth]{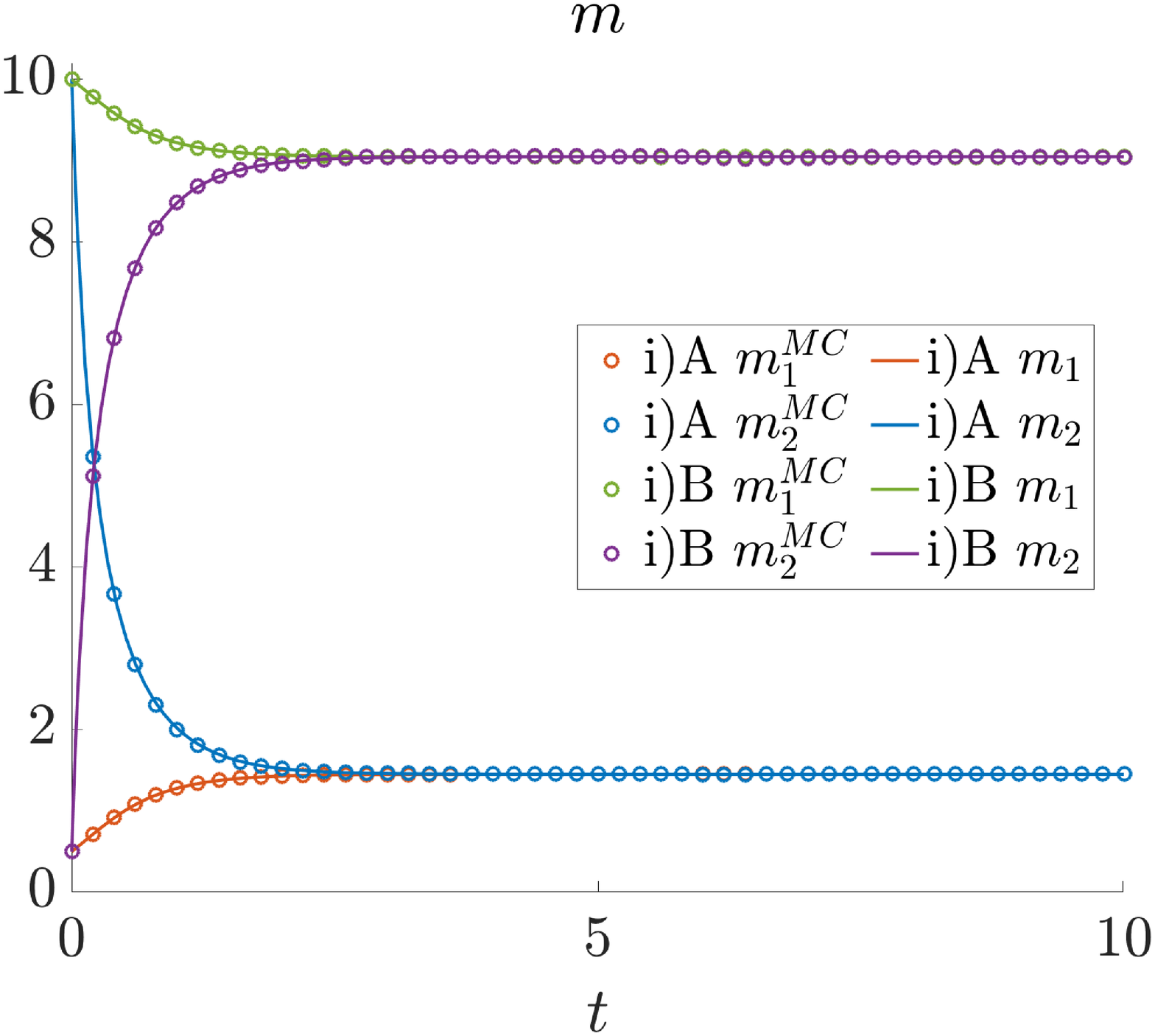}}\\
\subfigure[]{\includegraphics[width=0.48\textwidth]{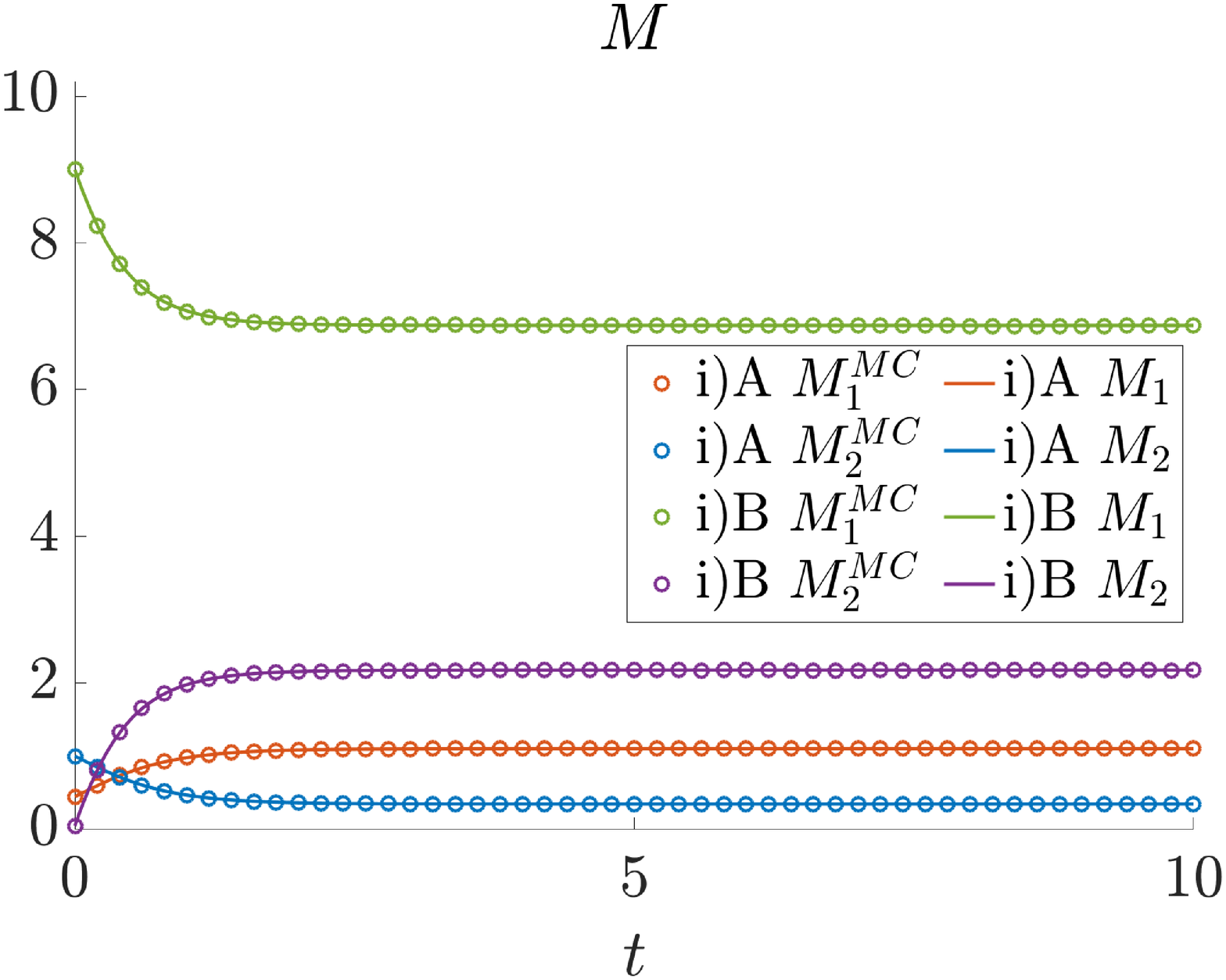}}
\subfigure[]{\includegraphics[width=0.48\textwidth]{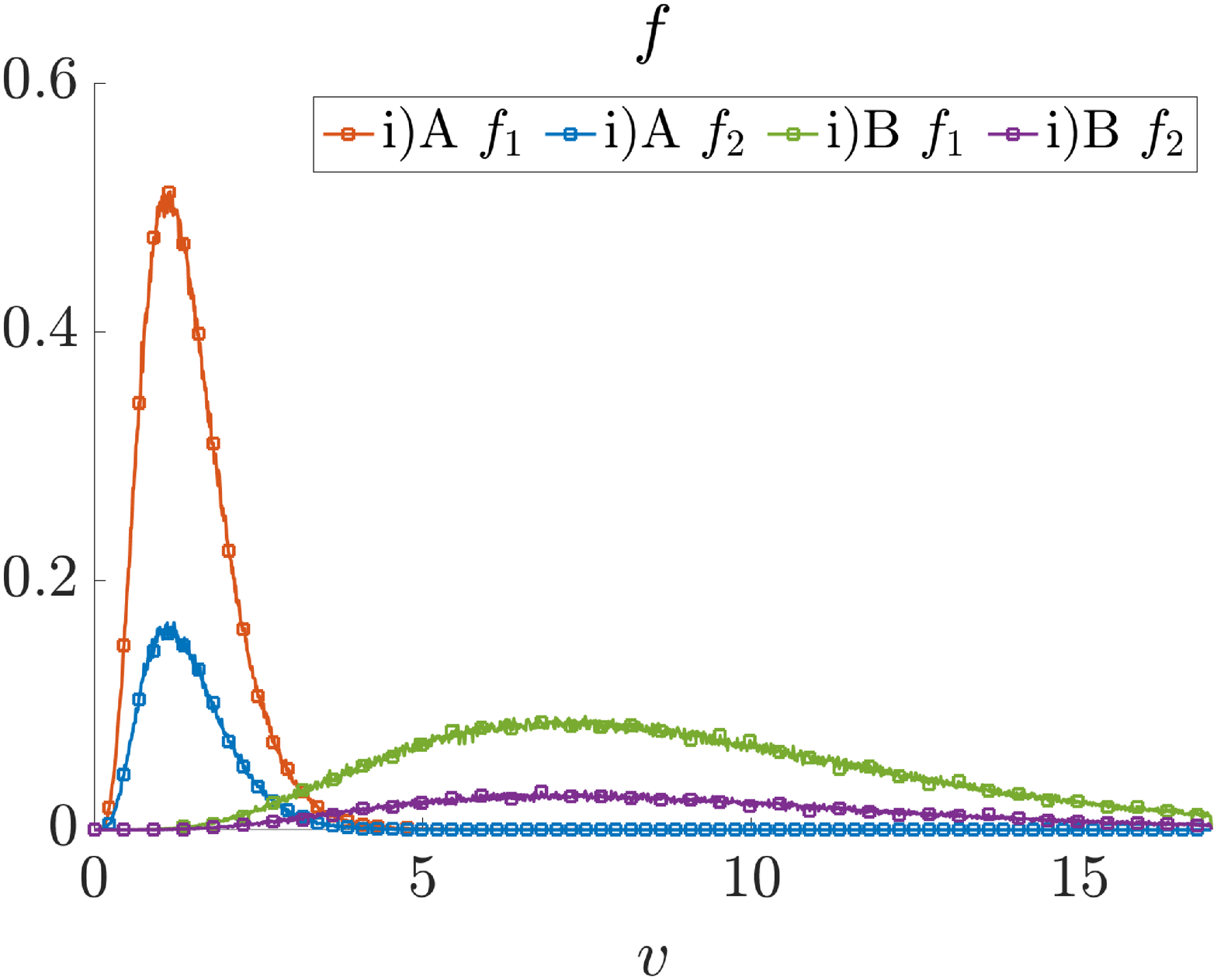}}\\
\subfigure[]{\includegraphics[width=0.48\textwidth]{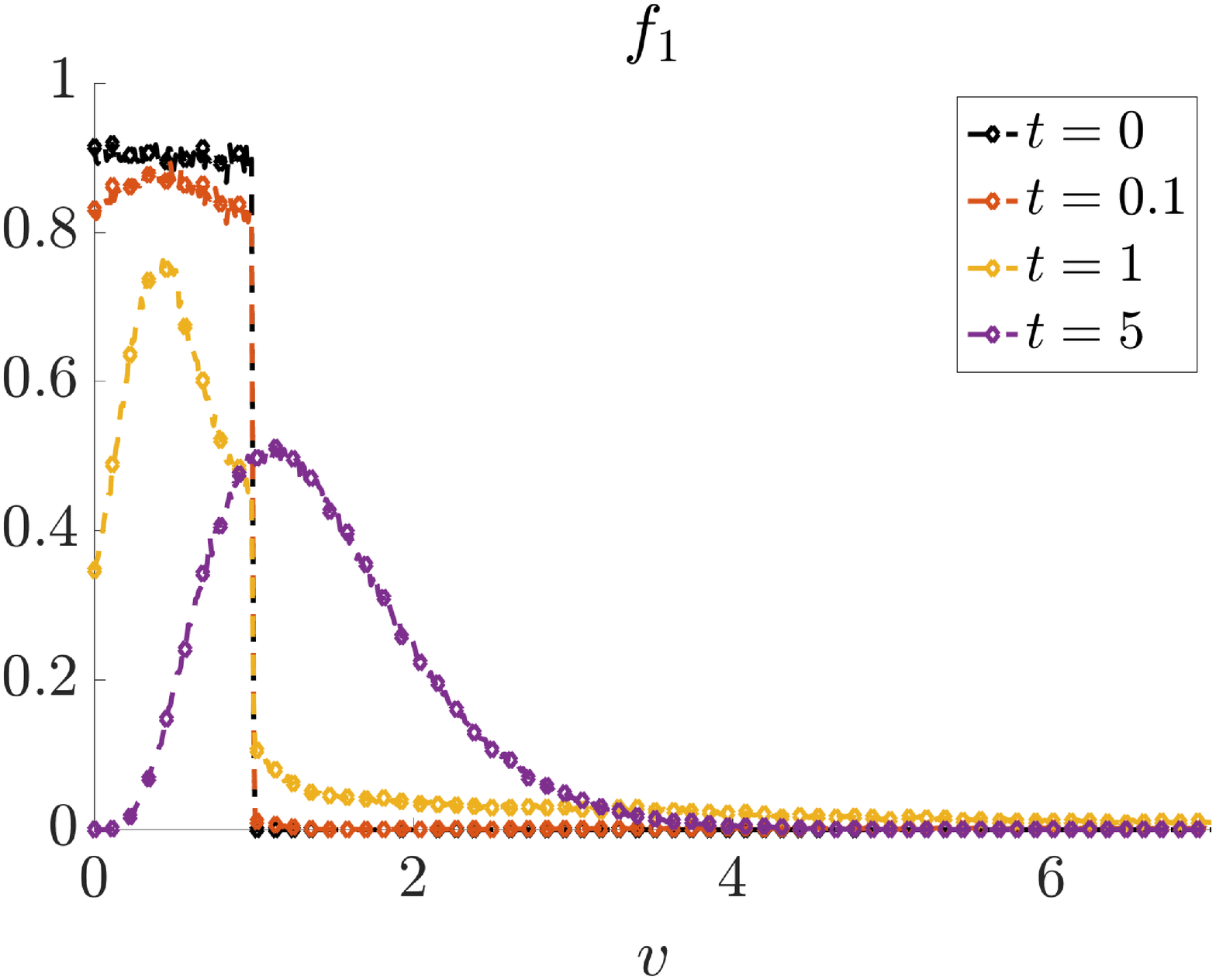}}
\subfigure[]{\includegraphics[width=0.48\textwidth]{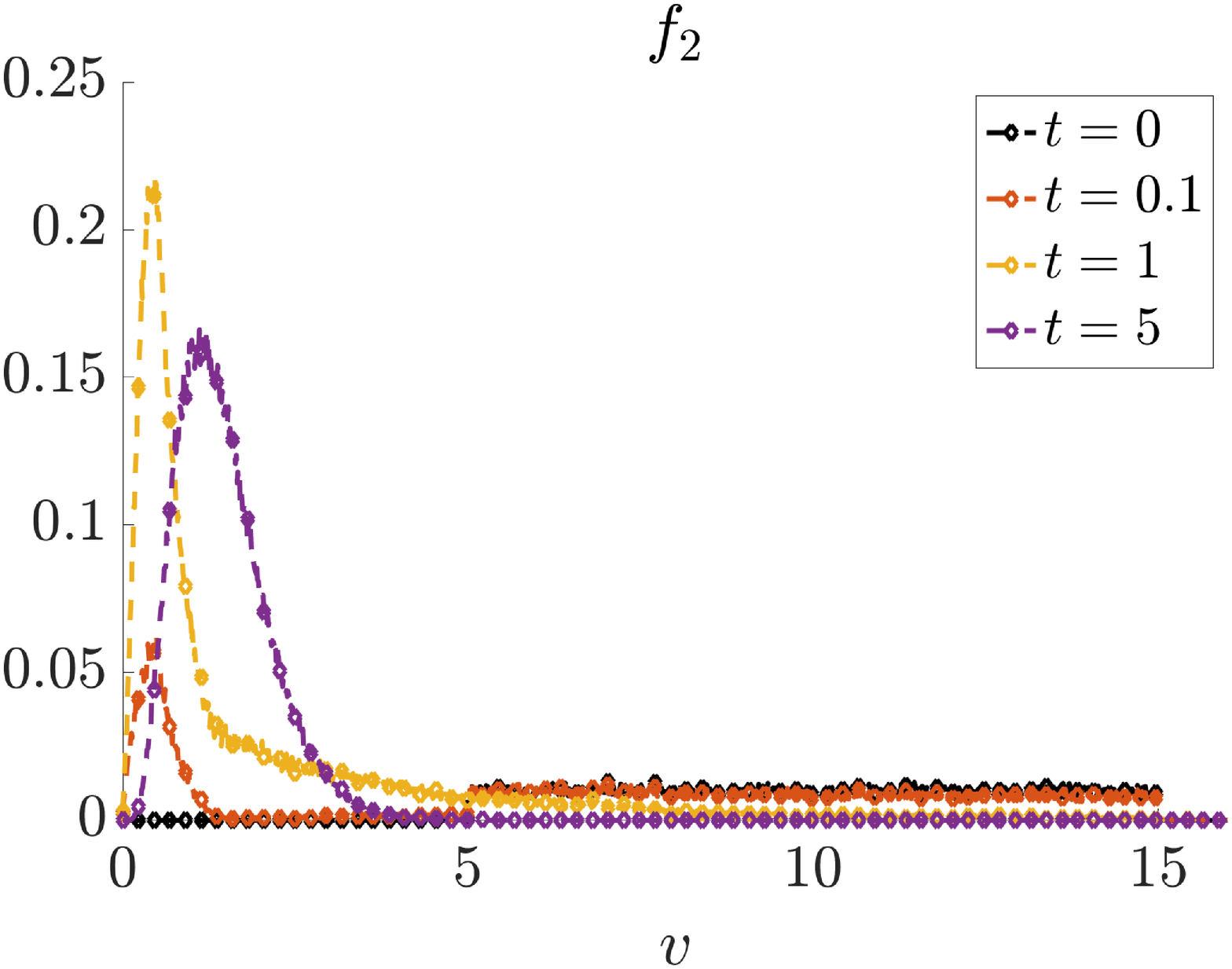}}
\caption{{\bf Test 1}: $\lambda_{11}=1, \lambda_{22}=10, \lambda_{12}=1$, $P_{12}^{11}=0.5, P_{12}^{22}=0.5$. In all figures we report the time evolution of the macroscopic quantities as prescribed by the kinetic model and by the derived macroscopic equations: masses (a), averages (b), first moments (c). Straight lines correspond to the solution to the macroscopic equations \eqref{eq:masse}-\eqref{medie}-\eqref{eq:momenti}, while circles correspond to the solution of the kinetic equation that we obtain by simulating with the Monte Carlo algorithm \ref{alg:nanbu} the microscopic model \eqref{eq:micro.rules.gen}-\eqref{eq:bernoulli}-\eqref{def:g_gt}-\eqref{eq:T_lab.bin} as illustrated in Sec. \ref{subsec:int_and_switch} with the microscopic rules \eqref{eq:micro.rule_2}-\eqref{eq:P_mod}. In all figures we compare the results of the solutions given the two different initial conditions A and B. In figure (d) we show the steady states of distributions $f_1$, $f_2$ in both test cases A) and B), while in figures (e), (f) we report time evolution of distributions functions in the test case A).
}
\label{fig1}
\end{figure}
In the second set of simulations ({\bf Test 2}) we choose $\lambda_{11}=1, \lambda_{22}=1, \lambda_{12}=10$, i.e. intra-species interactions have the same frequency in the two populations, while the inter-group interactions have a higher frequency. Moreover, we consider three cases for the inter-group interactions: in case i) $P_{12}^{11}=0.5, P_{12}^{22}=0.5$, i.e. given an inter-group interaction, the probability for both agents of transferring is the same, while in case ii) $P_{12}^{11}=0.2, P_{12}^{22}=0.8$, i.e. the probability of transferring to the subgroup 2 is higher and iii) $P_{12}^{11}=0.8, P_{12}^{22}=0.2$, i.e. the probability of transferring to the subgroup 1 is higher. The initial condition is set as in \eqref{def:f12_0}. We observe that in the three cases i), ii) and iii) we have respectively $\alpha=1, \alpha >1, \alpha <1$, that imply, see \eqref{eq:masse}, $\rho_1^{\infty}=\rho_2^{\infty}, \rho_1^{\infty}<\rho_2^{\infty}, \rho_2^{\infty}<\rho_1^{\infty}$, respectively. In particular, because of \eqref{eq:mass_switch}, in case ii) we have a switch in the trend of the populations, as population 1 becomes the less populated (see Fig. \ref{fig2} (a)). This also implies a different (non monotone) trend of the first moments as reported in Fig. \ref{fig2} (b).
\begin{figure}[!htbp]
\subfigure[]{\includegraphics[width=0.5\textwidth]{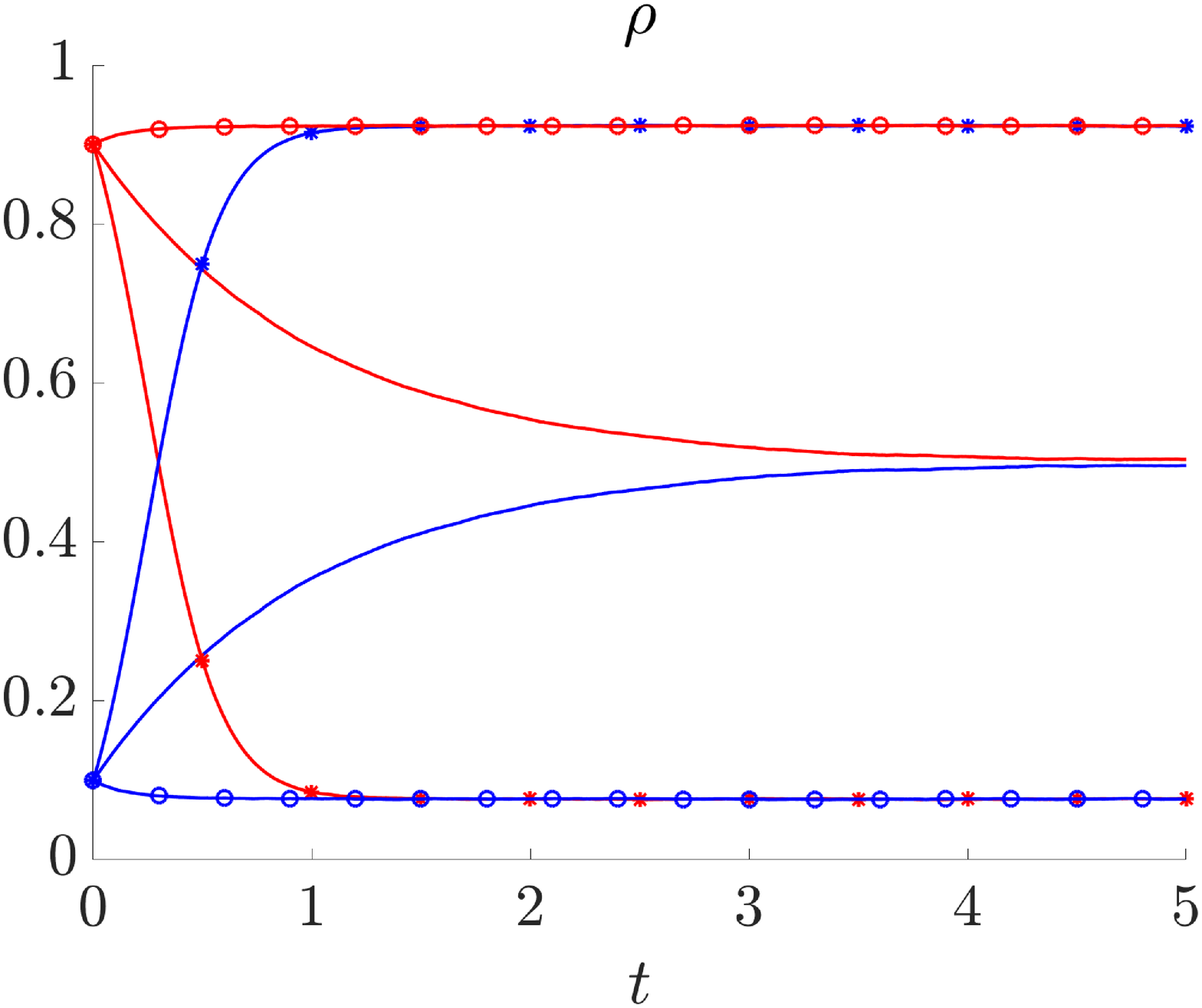}}
\subfigure[]{\includegraphics[width=0.5\textwidth]{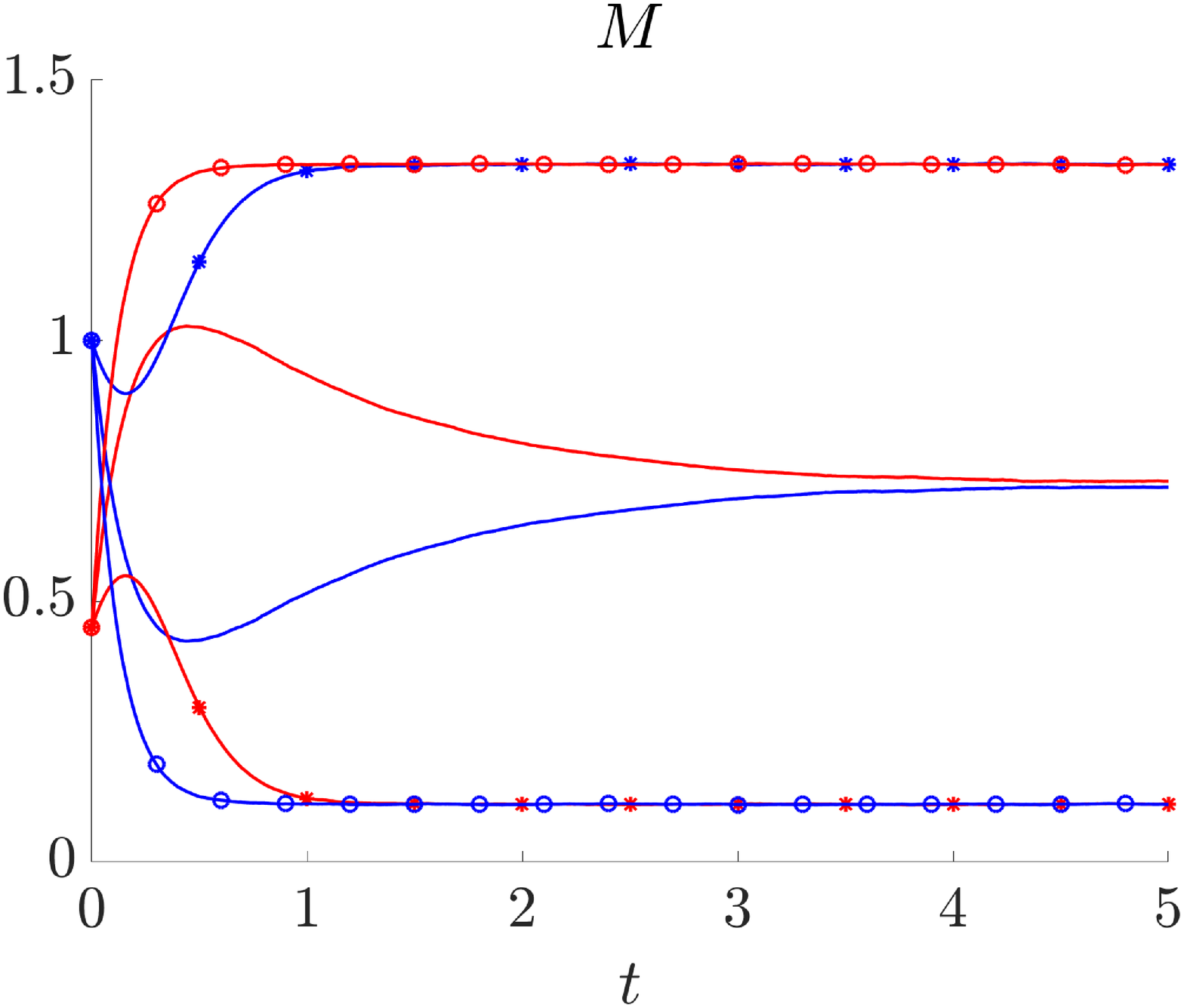}}
\caption{{\bf Test 2.} Solutions to the macroscopic model \eqref{eq:masse}-\eqref{eq:momenti}. In (a) we report the masses $\rho_1, \rho_2$ and in (b) the first moments $M_1, M_2$. Population 1 is in \textcolor{red}{red},while population 2 is in \textcolor{blue}{blue}. The parameters are $\lambda_{11}=1, \lambda_{22}=1, \lambda_{12}=10$ and we show three cases: i) (straight lines) $P_{12}^{11}=0.5, P_{12}^{22}=0.5$,  ii) (* marker) $P_{12}^{11}=0.2, P_{12}^{22}=0.8$  and iii) (o marker) $P_{12}^{11}=0.8, P_{12}^{22}=0.2$. }\label{fig2}
\end{figure}

In {\bf Test 3}, we consider a switching probability depending on the microscopic wealth. In this case, it is not immediate, in general, to derive equations for the macroscopic quantities, unless in special cases, for example $P_{ij}^{kl}$ having a linear dependence on $v$, or by imposing a monokinetic closure \cite{pareschi2013BOOK}. In Figure \ref{fig3}  we have that  $\lambda_{11}=.1, \lambda_{22}=1, \lambda_{12}=10$ and $P_{12}^{11}=0.2\dfrac{1}{4}\left[1-\exp^{-v}+1-\exp^{-w}\right], P_{12}^{22}=0.8\dfrac{1}{4}\left[1-\exp^{-v}+1-\exp^{-w}\right]$. We also present a comparison with the solution of  macroscopic equations \eqref{eq:masse}-\eqref{medie}-\eqref{eq:momenti} (straight lines) where we consider constant switching probabilities $P_{12}^{11}=0.2, P_{12}^{22}=0.8.$ We observe that both the microscopic model with $v$-dependent switching probabilities and the macroscopic model with constant switching probabilities forecast a similar behavior of the macroscopic quantities in the long run. The microscopic model with a $v$-dependent switching probability forecasts the same behaviour but with a delay, and this is due to the fact that the switching probabilities are smaller than the constant ones. In magenta and green we also present the results of the simulation of the  microscopic model in case $P_{12}^{11}=0.2\dfrac{1}{2}\left[\exp^{-v}+\exp^{-w}\right], P_{12}^{22}=0.8\dfrac{1}{2}\left[\exp^{-v}+\exp^{-w}\right]$. In this case we can observe that the convergence is even slower. This is due to the fact that for large values of the wealth $v$, the switching probability is very small.

\begin{figure}[!htbp]
\includegraphics[width=0.48\textwidth]{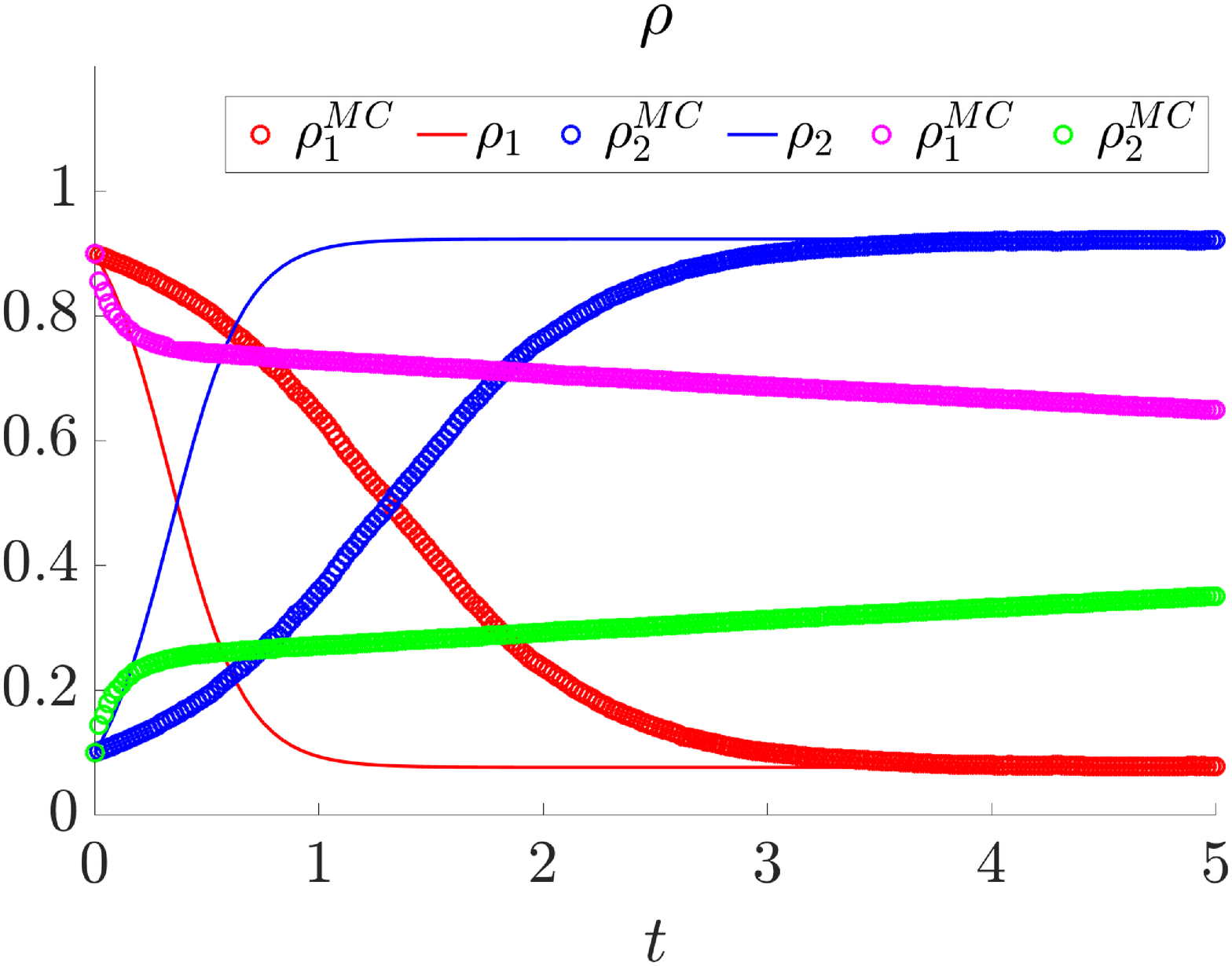}
\includegraphics[width=0.48\textwidth]{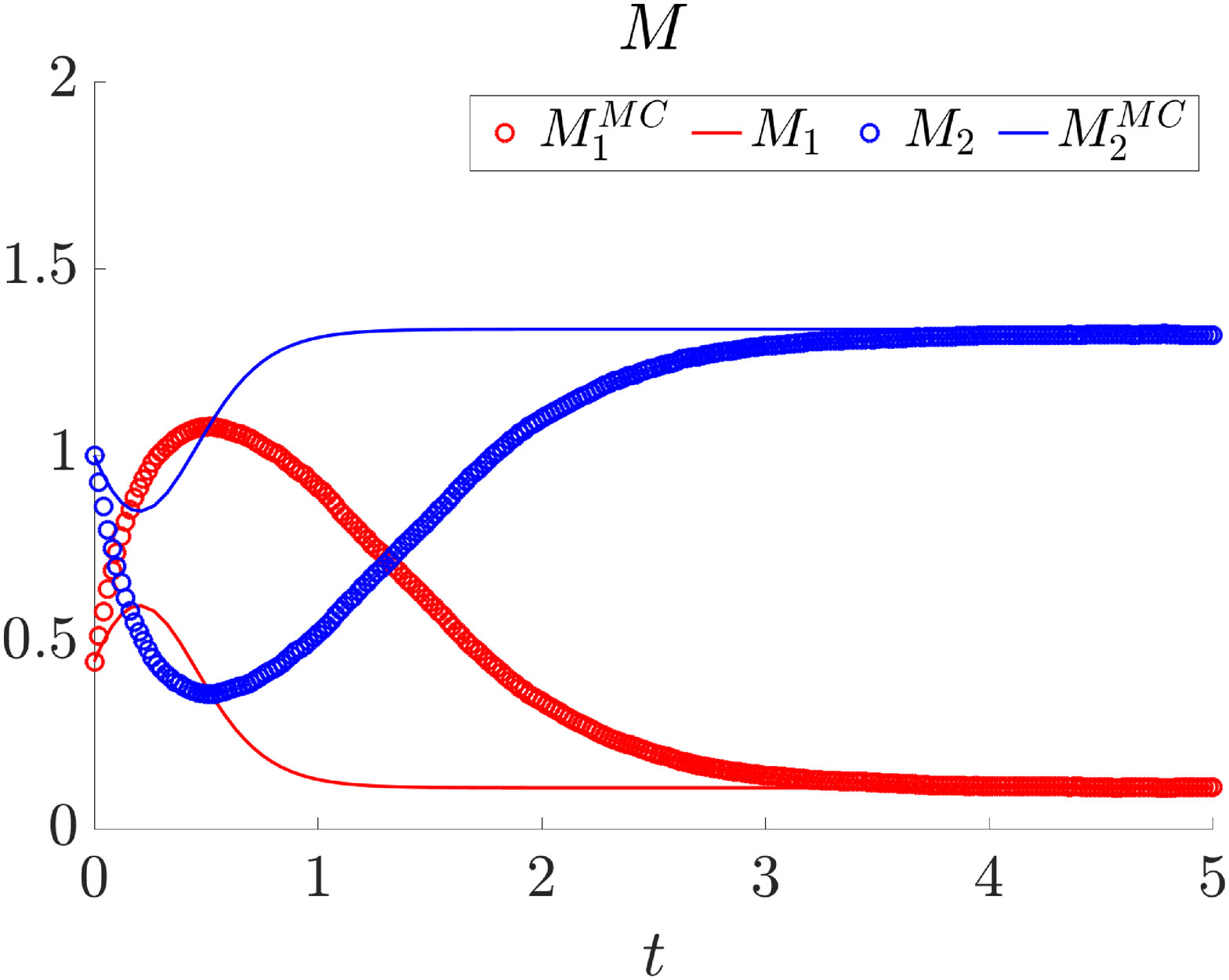}\\
\includegraphics[width=0.48\textwidth]{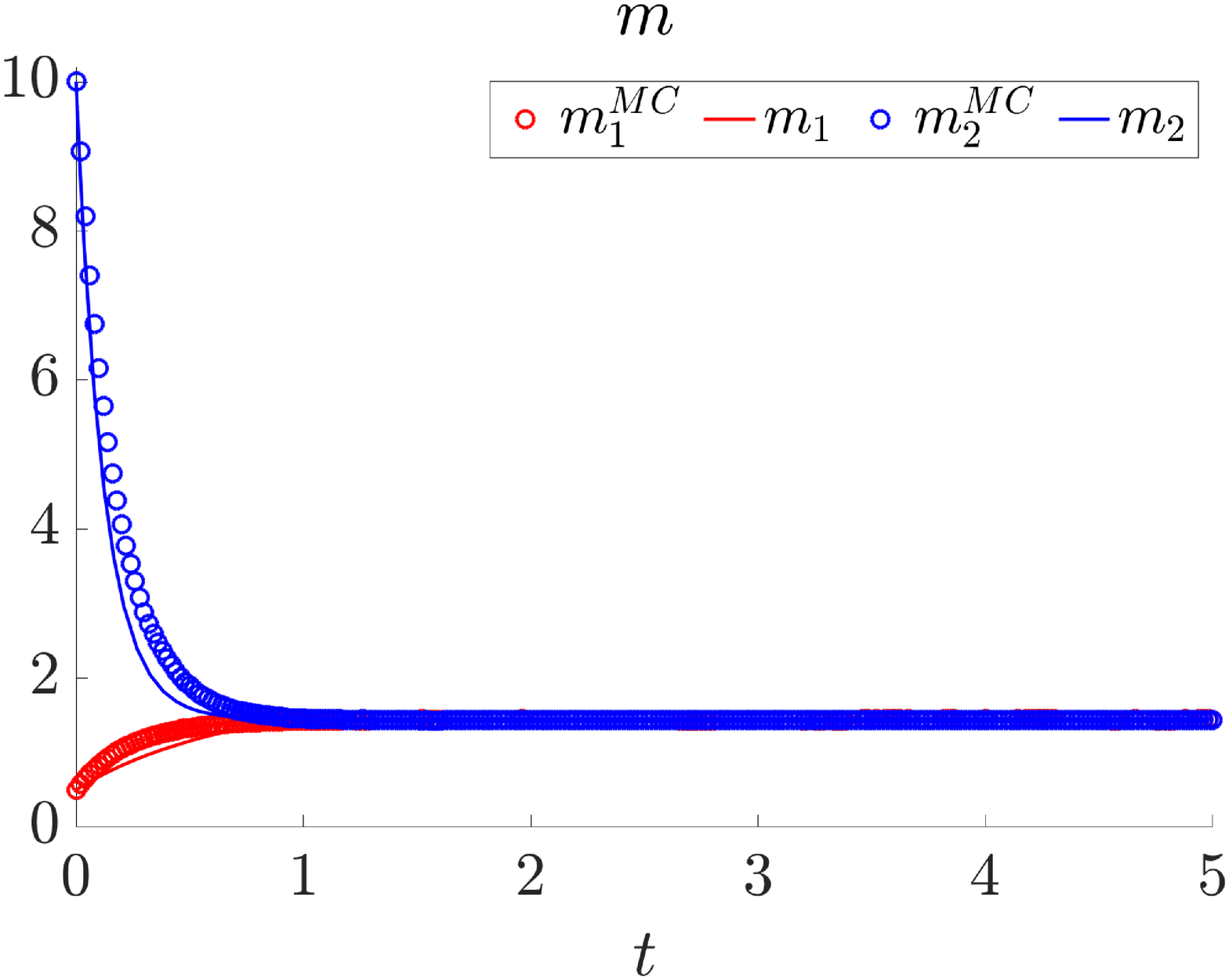}
\includegraphics[width=0.48\textwidth]{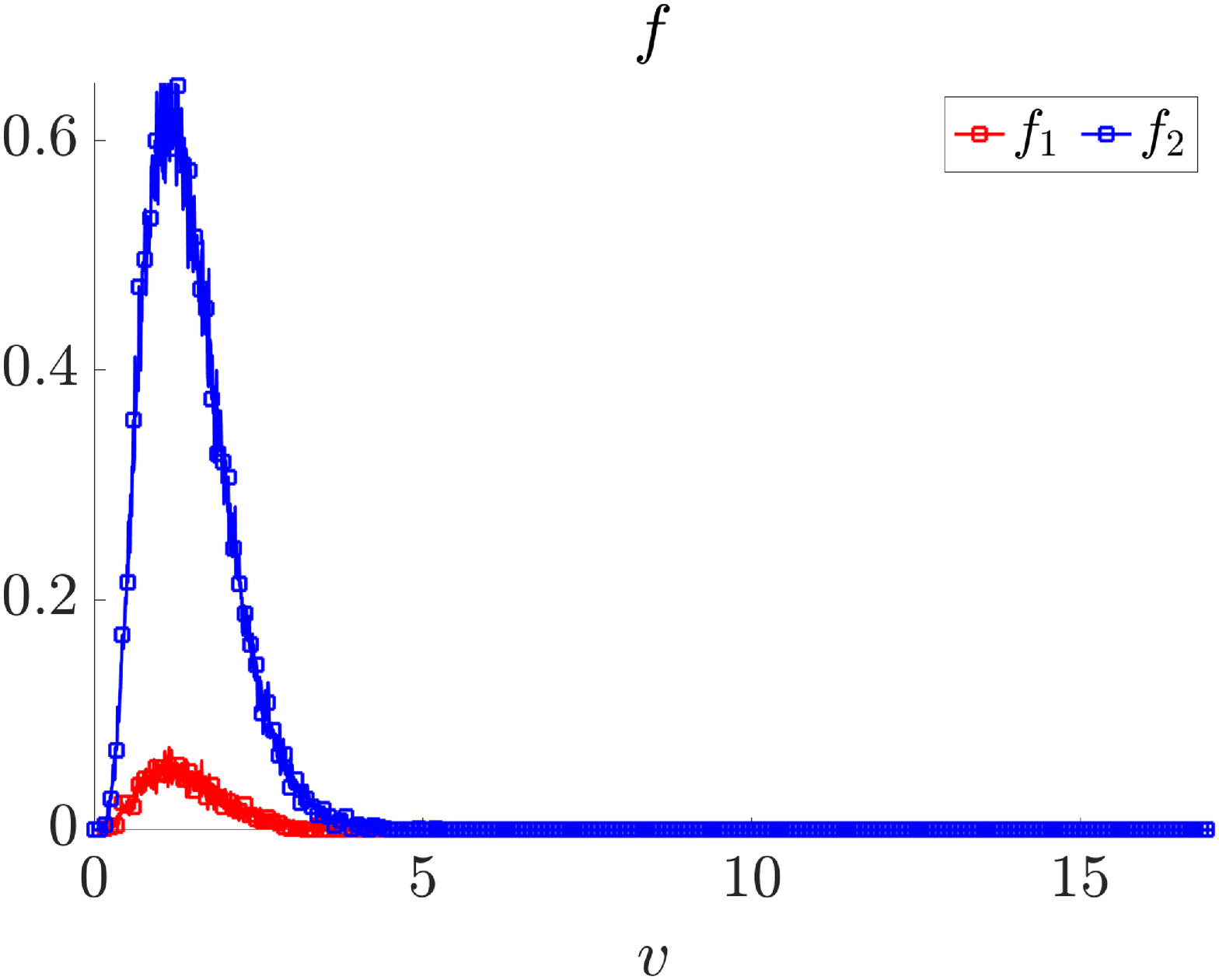}
\caption{{\bf Test 3.} Solution to the microscopic model \eqref{eq:micro.rules.gen}-\eqref{eq:bernoulli}-\eqref{def:g_gt}-\eqref{eq:T_lab.bin} as illustrated in Sec. \ref{subsec:int_and_switch} with the microscopic rules \eqref{eq:micro.rule_2}-\eqref{eq:P_mod} with the Monte Carlo algorithm \ref{alg:nanbu} (circles). Here $\lambda_{11}=1, \lambda_{22}=.1, \lambda_{12}=10$ and $P_{12}^{11}=0.2\dfrac{1}{2}\left[1-\exp^{-v}+1-\exp^{-w}\right], P_{12}^{22}=0.8\dfrac{1}{2}\left[1-\exp^{-v}+1-\exp^{-w}\right]$. We also present a comparison with the solution of  macroscopic equations \eqref{eq:masse}-\eqref{medie}-\eqref{eq:momenti} (straight lines) where we consider constant switching probabilities $P_{12}^{11}=0.2, P_{12}^{22}=0.8.$ In magenta and green we also present the results of the simulation of the  microscopic model in case $P_{12}^{11}=0.2\dfrac{1}{2}\left[\exp^{-v}+\exp^{-w}\right], P_{12}^{22}=0.8\dfrac{1}{2}\left[\exp^{-v}+\exp^{-w}\right]$.}
\label{fig3}
\end{figure}

\subsection{Quasi-invariant regime and Fokker-Planck equation}
In this section, we consider the quasi-invariant regime \eqref{eq:trans_qinv.rules}, i.e. we analyse the dynamics on a long time-scale by considering small exchanges of wealth, while the switching probability is not rescaled. In this framework, we have seen that it is possible to approximate the leading order of the stationary solution $f_1^{(0)}$ through \eqref{stato_staz}. Here, we compare the stationary state \eqref{stato_staz} with the solution $f_1^{MC}$  obtained by the numerical integration of the microscopic process \eqref{eq:micro.rules.gen}-\eqref{eq:bernoulli}-\eqref{def:g_gt}-\eqref{eq:trans_qinv.rules_noncons} with the microscopic rules \eqref{eq:micro.rule_2}-\eqref{eq:P_mod}, where in the quasi-invariant regime \eqref{eq:trans_qinv.rules_noncons} we have chosen $\epsilon=10^{-3}$. In Figure \ref{fig4} we represent the analytical $\tilde{f}_1^{(0)}$ as given in \eqref{stato_staz} and the approximation of $f_1^{MC}$. We can remark that, despite the fact that $f_1^{MC}$ is obtained through a Monte Carlo simulation and $\tilde{f}_1^{(0)}$ is an approximation, the agreement is quite good. In the right panel, we also represent the numerical approximation of the distribution function $g_1^{MC}$ of the first population, in case we consider a quasi-invariant transition probability defined by \eqref{eq:trans_qinv.rules}. With this choice it is granted that the evolution of both the average and the energy in the quasi-invariant regime $\epsilon \ll 1$ is the same as in the standard regime defined by $\epsilon=1$. In this case it was not possible to determine the stationary state explicitly like in the quasi-invariant regime leading to \eqref{stato_staz} that, on the other hand, only grants that the average is the same for each $\epsilon$.
\begin{figure}
\centering
\includegraphics[width=0.48\textwidth]{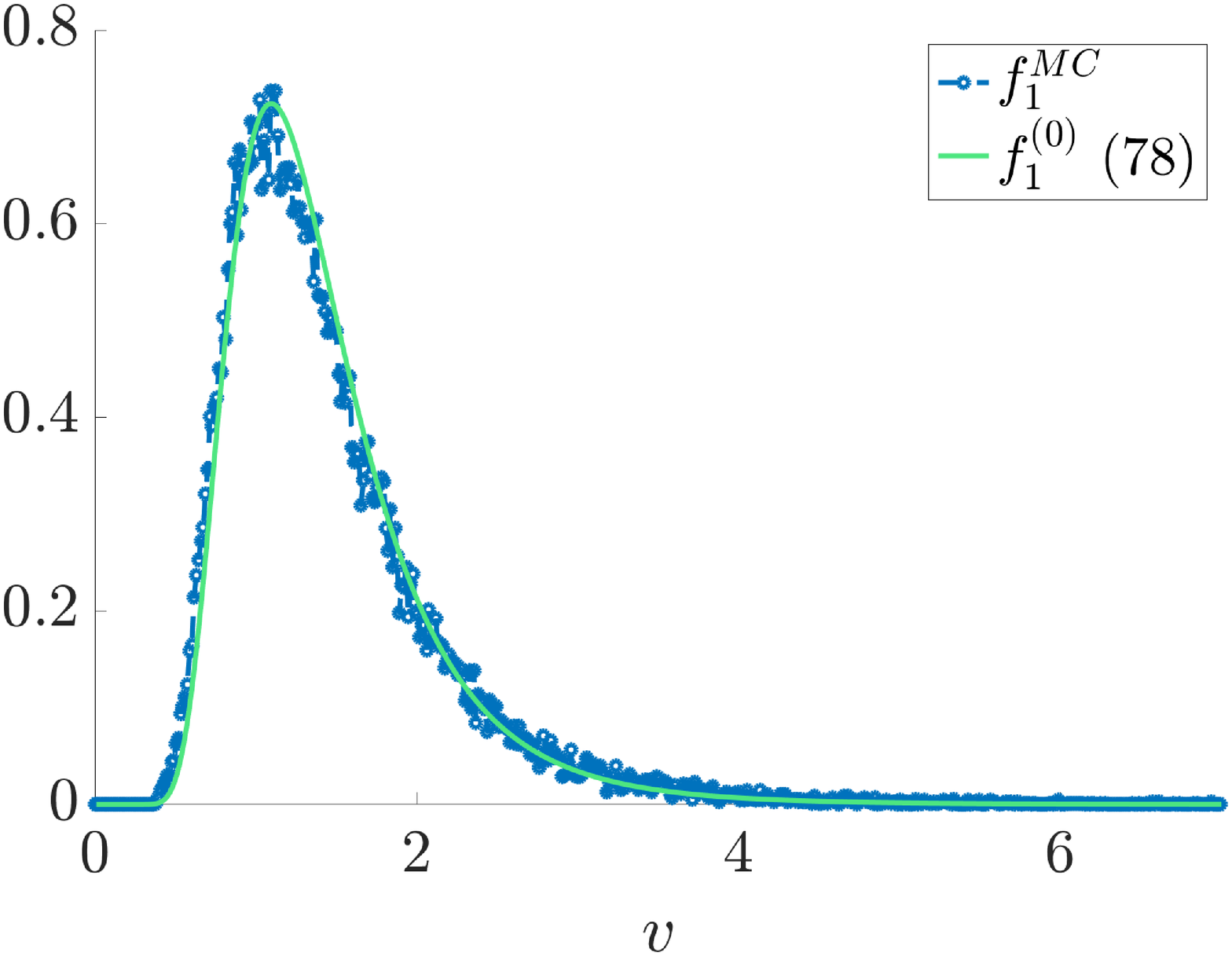}
\includegraphics[width=0.48\textwidth]{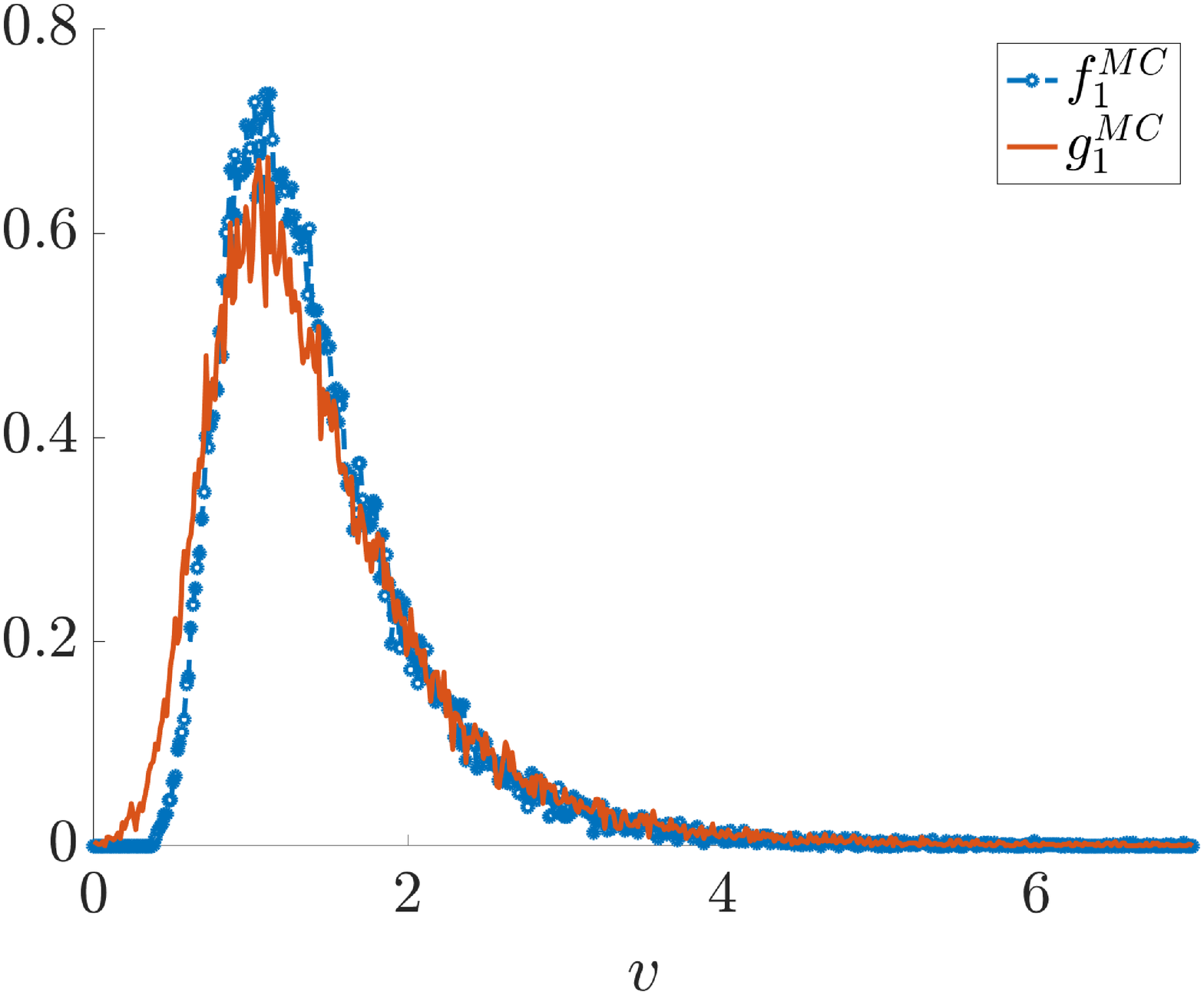}
\caption{Solution of the Fokker-Planck equation \eqref{eq:FP_simpl}. Left: comparison between the approximated stationary state $\tilde{f}_1^{(0)}$ given by \eqref{stato_staz} and the solution $f_1^{MC}$ of the microscopic model \eqref{eq:micro.rules.gen}-\eqref{eq:bernoulli}-\eqref{def:g_gt}-\eqref{eq:trans_qinv.rules_noncons} with the microscopic rules \eqref{eq:micro.rule_2}-\eqref{eq:P_mod}, with $\epsilon=10^{-3}$ in the quasi-invariant regime \eqref{eq:trans_qinv.rules_noncons} obtained with $N=10^6, \Delta t=10^{-3}$. Right: comparison between $f_1^{MC}$ and $g_1^{MC}$ obtained integrating the same, with the quasi-invariant regime \eqref{eq:trans_qinv.rules}.}
\label{fig4}
\end{figure}

\section{Conclusions}
In this paper we have presented a general framework for modeling systems of interacting particles with multiple microscopic states changing simultaneously according to a given dynamics.
In particular, the microscopic description relies on Markovian processes described by transition probabilities, as they depend on the pre-interaction states, and the interaction frequency depends on the microscopic states. The fact of starting from the microscopic stochastic process allows to describe in more detail the dynamics, by including parameters and quantities related to the phenomenon under study that can be observed. The derivation, through kinetic equations, of macroscopic equations allows to obtain also at the aggregate level a higher level of detail that is inherited from the underlying microscopic dynamics.

\noindent Under some assumptions, general results concerning well-posedness, existence and uniqueness of a solution for the Cauchy problem associated to our kinetic equation have been shown. We have also rephrased the concept of quasi-invariant limit in the present framework, leading to evolution equations of Fokker-Planck type.

We have applied the present modeling framework in order to describe systems of binarily interacting agents characterized by a physical quantity $v$ (representing wealth, or opinion, or viral load, etc.)\ and by a label $x$ denoting the belonging to a given subgroup. The physical quantity changes according to binary interaction rules, while the label changes, simultaneously with the physical quantity, through a switch process caused by the same binary interaction.
In this context, we have seen that the description of the microscopic process by means of transition probabilities allows us to remove the reversibility assumption on the interaction rule, modeling thus also stochasticity in the binary encounters giving rise to transfers (not present in the paper \cite{bisi2021PhTB} using classical Boltzmann operators analogous to the reactive ones). Moreover, in our framework it is easier to consider a non-constant switching probability, i.e. depending on the microscopic physical quantity.
We have analyzed and discussed various quasi-invariant regimes and performed some numerical tests showing a very good agreement between the microscopic Monte Carlo simulations and the derived macroscopic equations.

The modelling framework investigated in this paper is worth to be applied and generalized to many other problems. As first, the model for international trade with transfers presented in Section 4 could be extended by adding an extra independent microscopic process for $v$, describing the exchange of goods without transfers; this would make the model even more similar to the kinetic description of gaseous mixtures, where elastic collisions (which do not change the nature of the particles) coexist with chemical reactions (changing the species of the reacting particles).
Epidemic models based on a kinetic approach could be improved owing to our stochastic framework with multiple states as it allows to start from a microscopic description and to consider independent or simultaneous microscopic stochastic dynamics for the different variables of the microscopic state. For example, the so-called ``non--conservative'' interactions giving rise to the passage from one compartment to another could be made more realistic taking into account also the simultaneous change of viral load (of individuals) as done in \cite{dellamarca2022NHM,dellamarca2022preprint} or internal activity (of cells).
Moreover, the relation with kinetic models with label switching and gradient descent could be established \cite{burgerrossi}.
Eventually, applications to situations with many internal states is the final scope of our framework. It could provide for instance a physically reasonable description of mixtures of polyatomic gases, with each molecule characterized by its species label $i$, its velocity ${\bf v} \in \mathbb{R}^3$, and its internal energy that could also be separated into the vibrational part (typically described by a discrete variable) and the rotational part (typically approximated by a continuous variable) \cite{herzberg1950BOOK, borsoni2022CIMP}. Even in econophysics, the possible influence of the personal knowledge of the market on the strategy adopted in the trades (as sketched in \cite{pareschi2014PhTB} for a single population) could be described considering the individual knowledge as an additional microscopic state, besides the population label and the individual amount of wealth.
Suitable quasi-invariant limits and properties of steady states of such non-standard kinetic descriptions of various interacting populations are completely open problems worth to be investigated in future research.

\textbf{Acknowledgments}
This research was initiated during the post-doc contract of N.L. at the Department of Mathematical, Physical and Computer Sciences of Parma University, funded by the Italian National Research Project ``Multiscale phenomena in Continuum Mechanics: singular limits, off-equilibrium and transitions'' (Prin 2017YBKNCE). The authors also thank the support by University of Parma, by Politecnico di Torino, by the Italian National Group of Mathematical Physics (GNFM-INdAM), and by the Italian PRIN Research Project ``Integrated Mathematical Approaches to Socio--Epidemiological Dynamics'' (Prin 2020JLWP23, CUP: E15F21005420006).

\section*{Appendix: Nanbu-Babovski algorithm}

\begin{algorithm}[H]
	\caption{Nanbu-Babovski algorithm with mass transfer for model~\eqref{eq:micro.rules.gen}-\eqref{eq:bernoulli}-\eqref{def:g_gt}- \eqref{eq:micro.rule_2}-\eqref{eq:P_mod}}
	\label{alg:nanbu}
	\KwData{
		\begin{itemize}[noitemsep]
			\item $N\in\mathbb{N}$ total number of agents of the system;
			\item $N_1^n,\,N_2^n\in\mathbb{N}$ numbers of agents in $x=1$, $x=2$, respectively, at time $t^n:=n\Delta{t}$ and $v_1^n, \, v_2^n$ the microscopic states of agents in $x=1$, $x=2$, respectively, at time $t^n$;
		\end{itemize}
	}
	Fix $\Delta{t}\leq\min\{\frac{1}{\max \lambda_{ij}}\}$\;
	\For{$n=0,\,1,\,2,\,\dots$}{
		Compute \newline{}
		$\rho_1^{MC,n}=\dfrac{N_1^n}{N}, \qquad \rho_2^{MC,n}=\dfrac{N_2^n}{N}, \qquad
			m_1^{MC,n}=\dfrac{1}{N_1^n}\displaystyle{\sum_{k=1}^{N_1^n}}v_k^n, \qquad m_2^{MC,n}=\dfrac{1}{N_2^n}\displaystyle{\sum_{k=1}^{N_2^n}}v_k^n$\;
		\Repeat{no unused pairs of agents are left}{
			Pick randomly two agents $(x_i^n,\,v_i^n)$, $(x_j^n,\,v_j^n)$ with $i\neq j$\;
			\For{$h=i,\,j$}{Sample $\Theta\sim\operatorname{Bernoulli}(\lambda_{x_i^nx_j^n}\Delta{t})$\;
				\If{$\Theta=1$}{
				\For{$\lbrace x'_i,\,x'_j\rbrace \in \cI_n^2$}{
			    Sample $J\in\{1,\,0\}$ with law
				    $\P(J=1)=P_{x_i^n,x_j^n}^{x_i'x_j'}, \quad \P(J=0)=1-P_{x_i^n,x_j^n}^{x_i'x_j'}$;\newline{}
				    \If{$J=1$}{
                   Set $(x_i^{n+1},x_j^{n+1})=(x_i',x_j')$;\newline{}
                   Set $(v_i^{n+1},\,v_j^{n+1})=(v_i',v_j')$ where $(v_i',v_j')$ is given by \eqref{eq:bin_rules_gen}-\eqref{eq:micro.rule_2} and \textbf{break}}
                   }}				
					    \Else{Set $x_h^{n+1}=x_h^n, \, v_h^{n+1}=v_h^n$;}
					
			}
			}
			}
\end{algorithm}

\bibliographystyle{plain}
\bibliography{biblio}

\begin{thebibliography}{10}

\bibitem{ambrosio}
L.~Ambrosio, N.~Gigli, and G.~Savar\`e.
\newblock {\em Gradient flows in metric spaces and in the space of probability
  measures}.
\newblock Birkhauser Verlag, Basel, 2008.

\bibitem{andries2002JSP}
P.~Andries, K.~Aoki, and B.~Perthame.
\newblock A consistent bgk-type model for gas mixtures.
\newblock {\em J. Stat. Phys.}, 106:993--1018, 2002.

\bibitem{bisi2017UMI}
M.~Bisi.
\newblock Some kinetic models for a market economy.
\newblock {\em Boll. Unione Mat. Ital.}, 10:143--158, 2017.

\bibitem{bisi2021PhTB}
M.~Bisi.
\newblock Kinetic model for international trade allowing transfer of
  individuals.
\newblock {\em Phil. Trans. A}, 380:20210156. (pp. 1--14), 2022.

\bibitem{bisi2010PRE}
M.~Bisi, M.~Groppi, and G.~Spiga.
\newblock Kinetic {B}hatnagar-{G}ross-{K}rook model for fast reactive mixtures
  and its hydrodynamic limit.
\newblock {\em Phys. Rev. E}, 81:036327 (pp.~1--9), 2010.

\bibitem{bobylev2000JSP}
A.V. Bobylev, J.A. Carrillo, and I.~Gamba.
\newblock On some properties of kinetic and hydrodynamics equations for
  inelastic interactions.
\newblock {\em J. Stat. Phys.}, 98:743--773, 2000.

\bibitem{boffi1990PhysA}
V.C. Boffi, V.~Protopopescu, and G.~Spiga.
\newblock On the equivalence between the probabilistic, kinetic, and scattering
  kernel formulations of the {B}oltzmann equation.
\newblock {\em Physica A}, 164:400--410, 1990.

\bibitem{borsche2022PhysA}
R.~Borsche, A.~Klar, and M.~Zanella.
\newblock Kinetic-controlled hydrodynamics for multilane traffic models.
\newblock {\em Physica A}, 587:126486 (pp. 1--17), 2022.

\bibitem{borsoni2022CIMP}
T.~Borsoni, M.~Bisi, and M.~Groppi.
\newblock A general framework for the kinetic modelling of polyatomic gases.
\newblock {\em Comm. Math. Phys.}, 393:215–266, 2022.

\bibitem{burgerrossi}
M.~Burger and A.~Rossi.
\newblock Analysis of kinetic models for label switching and stochastic
  gradient descent.
\newblock Preprint: arXiv.2207.00389, 2022.

\bibitem{cercignani1988BOOK}
C.~Cercignani.
\newblock {\em The Boltzmann Equation and its Applications}.
\newblock Number~67 in Applied Mathematical Sciences. Springer, New York, 1988.

\bibitem{chapman1970BOOK}
S.~Chapman and T.G. Cowling.
\newblock {\em The Mathematical Theory of Non-Uniform Gases}.
\newblock Cambridge University Press, Cambridge, 1970.

\bibitem{cordier2005JSP}
S.~Cordier, L.~Pareschi, and G.~Toscani.
\newblock On a kinetic model for a simple market economy.
\newblock {\em J. Stat. Phys.}, 120(1):253--277, 2005.

\bibitem{delitala2014KRM}
M.~Delitala and T.~Lorenzi.
\newblock A mathematical model for value estimation with public information and
  herding.
\newblock {\em Kinet. Relat. Models}, 7(1):29--44, 2014.

\bibitem{dellamarca2022NHM}
R.~Della~Marca, N.~Loy, and A.~Tosin.
\newblock An {SIR}-like kinetic model tracking individuals' viral load.
\newblock {\em Networks and Heterogeneous Media}, 17:467--494, 2022.

\bibitem{dellamarca2022preprint}
R.~Della~Marca, N.~Loy, and A.~Tosin.
\newblock An {SIR} model with viral load-dependent transmission.
\newblock Preprint: arXiv:2208.12004, 2022.

\bibitem{dellamarca2022MMAS}
R.~Della~Marca, M.D.P. Machado~Ramos, C.~Ribeiro, and A.J. Soares.
\newblock Mathematical modelling of oscillating patterns for chronic autoimmune
  diseases.
\newblock {\em Math. Meth. Appl. Sci.}, 45:7144--7161, 2022.

\bibitem{dimarco2020PRE}
G.~Dimarco, L.~Pareschi, G.~Toscani, and M.~Zanella.
\newblock Wealth distribution under the spread of infectious diseases.
\newblock {\em Phys. Rev. E}, 102(2):022303, 2020.

\bibitem{During2010Econ}
B.~During.
\newblock Multi-species models in econo- and sociophysics.
\newblock {\em Econophysics \& economics of games, social choices and
  quantitative techniques (eds. B. Basu, B.K. Chakrabarti, S.R. Chakravarty, K.
  Gangopadhyay)}, pages 83--89, 2010.
\newblock Dordrecht: Springer.

\bibitem{during2009PRSA}
B.~During, P.~Markowich, J.F. Pietschmann, and M.T. Wolfram.
\newblock Boltzmann and {F}okker-{P}lanck equations modelling opinion formation
  in the presence of strong leaders.
\newblock {\em Proc. R. Soc. A}, 465:3687--3708, 2009.

\bibitem{DB2008CMS}
B.~During and G.~Toscani.
\newblock International and domestic trading and wealth distribution.
\newblock {\em Commun. Math. Sci.}, 6:1043–1058, 2008.

\bibitem{festa2018KRM}
A.~Festa, A.~Tosin, and M.T. Wolfram.
\newblock Kinetic description of collision avoidance in pedestrian crowds by
  sidestepping.
\newblock {\em Kinet. Relat. Models}, 11:491--520, 2018.

\bibitem{freguglia2017CMS}
P.~Freguglia and A.~Tosin.
\newblock Proposal of a risk model for vehicular traffic: {A} {B}oltzmann-type
  kinetic approach.
\newblock {\em Commun. Math. Sci.}, 15(1):213--236, 2017.

\bibitem{Gio1999}
V.~Giovangigli.
\newblock {\em Multicomponent Flow Modeling}.
\newblock Birkhäuser, Boston, 1999.

\bibitem{greenman2016PRE}
C.~D. Greenman and T.~Chou.
\newblock Kinetic theory of age-structured stochastic birth-death processes.
\newblock {\em Phys. Rev. E}, 93(1):012112, 2016.

\bibitem{herzberg1950BOOK}
G.~Herzberg.
\newblock {\em Molecular Spectra and Molecular Structure}.
\newblock Van Nostrand Reinold, New York, 1950.

\bibitem{kogan1969BOOK}
M.N. Kogan.
\newblock {\em Rarefied Gas Dynamics}.
\newblock Plenum Press, New York, 1969.

\bibitem{loy2021EJAM}
N.~Loy, T.~Hillen, and K.~Painter.
\newblock Direction dependent turning leads to anisotropic diffusion and
  persistence.
\newblock {\em European Journal of Applied Mathematics}, 33(4):729--765, 2022.

\bibitem{loy2020CMS}
N.~Loy and A.~Tosin.
\newblock Markov jump processes and collision-like models in the kinetic
  description of multi-agent systems.
\newblock {\em Commun. Math. Sci.}, 18(6):1539--1568, 2020.

\bibitem{loy2021KRM}
N.~Loy and A.~Tosin.
\newblock {B}oltzmann-type equations for multi-agent systems with label
  switching.
\newblock {\em Kinet. Relat. Models}, 14(5):867--894, 2021.

\bibitem{machadoramos2019JMB}
M.D.P. Machado~Ramos, C.~Ribeiro, and A.J. Soares.
\newblock A kinetic model of {T}-cell autoreactivity in autoimmune diseases.
\newblock {\em J. Math. Biol.}, 79:2005--2031, 2019.

\bibitem{pareschi2013BOOK}
L.~Pareschi and G.~Toscani.
\newblock {\em Interacting {M}ultiagent {S}ystems: {K}inetic equations and
  {M}onte {C}arlo methods}.
\newblock Oxford University Press, 2013.

\bibitem{pareschi2014PhTB}
L.~Pareschi and G.~Toscani.
\newblock Wealth distribution and collective knowledge: a {B}oltzmann approach.
\newblock {\em Phil. Trans. A}, 372:20130396, 2014.

\bibitem{pirner2021Fluids}
M.~Pirner.
\newblock A review on {BGK} models for gas mixtures of mono and polyatomic
  molecules.
\newblock {\em Fluids}, 6:393, 2021.

\bibitem{puppo2017KRM}
G.~Puppo, M.~Semplice, A.~Tosin, and G.~Visconti.
\newblock Kinetic models for traffic flow resulting in a reduced space of
  microscopic velocities.
\newblock {\em Kinet. Relat. Models}, 10(3):823--854, 2017.

\bibitem{rossani1999PhysA}
A.~Rossani and G.~Spiga.
\newblock A note on the kinetic theory of chemically reacting gases.
\newblock {\em Physica A}, 272:563--573, 1999.

\bibitem{spiga1985PhysA}
G.~Spiga, T.~Nonnenmacher, and V.C. Boffi.
\newblock Moment equations for the diffusion of the particles of a mixture via
  the scattering kernel formulation of the nonlinear {B}oltzmann equation.
\newblock {\em Physica A}, 131:431--448, 1985.

\bibitem{toscani2006CMS}
G.~Toscani.
\newblock Kinetic models of opinion formation.
\newblock {\em Commun. Math. Sci.}, 4(3):481--496, 2006.

\bibitem{toscani2018PRE}
G.~Toscani, A.~Tosin, and M.~Zanella.
\newblock Opinion modeling on social media and marketing aspects.
\newblock {\em Phys. Rev. E}, 98:022315, 2018.

\bibitem{toscani2019PRE}
G.~Toscani, A.~Tosin, and M.~Zanella.
\newblock Multiple-interaction kinetic modeling of a virtual-item gambling
  economy.
\newblock {\em Phys. Rev. E}, 100:012308, 2019.

\bibitem{villani1998ARMA}
C.~Villani.
\newblock On a new class of weak solutions to the spatially homogeneous
  {B}oltzmann and {L}andau equations.
\newblock {\em Arch. Ration. Mech. Anal.}, 143(3):273--307, 1998.

\bibitem{waldamnn1958HdP}
L.~Waldmann.
\newblock Transporterscheinungen in gasen von mittlerem druck.
\newblock {\em Handbuch der Physik}, 12:295--514, 1958.
\newblock S. Fl\"ugge ed., Springer Verlag, Berlin.

\end{thebibliography}

\appendix

\end{document}